\renewcommand{\leq}{\leqslant}
\renewcommand{\geq}{\geqslant}
\newcommand{\N}{\mathbb{N}}
\newcommand{\C}{\mathbb{C}}
\newcommand{\E}{\mathbb{E}}
\newcommand{\U}{\mathcal{U}}
\renewcommand{\S}{\mathcal{S}}
\newcommand{\bra}[1]{\langle #1 |}
\newcommand{\ket}[1]{| #1 \rangle}
\renewcommand{\d}[1]{\mathrm{d}#1}
\newcommand{\ol}{\overline}
\newcommand{\isom}{\simeq}
\DeclareMathOperator{\trace}{Tr}
\DeclareMathOperator{\id}{id}
\DeclareMathOperator{\Wg}{Wg}
\DeclareMathOperator{\Mob}{Mob}
\DeclareMathOperator*{\Ex}{\mathbb{E}}
\renewcommand{\phi}{\varphi}
\newcommand{\iy}{\infty}
\newtheorem{theorem}{Theorem}[section]
\newtheorem{definition}[theorem]{Definition}
\newtheorem{proposition}[theorem]{Proposition}
\newtheorem{remark}[theorem]{Remark}
\newtheorem{lemma}[theorem]{Lemma}
\def\l{\left}
\def\r{\right}
\def\bra{\langle}
\def\ket{\rangle}
\def\be{\begin{eqnarray}}
\def\ee{\end{eqnarray}}
\def\bee{\begin{equation}}
\def\eee{\end{equation}} 
\begin{document}

\title[Towards states achieving MOE for product channels]{Towards a state minimizing the output entropy of a tensor product of random quantum channels}
\author{Beno\^{\i}t Collins}
\address{
D\'epartement de Math\'ematique et Statistique, Universit\'e d'Ottawa,
585 King Edward, Ottawa, ON, K1N6N5 Canada, 
CNRS, Institut Camille Jordan Universit\'e  Lyon 1, 43 Bd du 11 Novembre 1918, 69622 Villeurbanne
France
and 
Research Institute for Mathematical Sciences, Kyoto University, Kyoto 606-8502 JAPAN} 
\email{bcollins@uottawa.ca}
\author{Motohisa Fukuda}
\address{
Institute for Quantum Computing,
University of Waterloo,
200 University Ave. West
Waterloo, Ontario, N2L 3G1 Canada}
\email{mfukuda@uwaterloo.ca}
\author{Ion Nechita}
\address{
CNRS, Laboratoire de Physique Th{\'e}orique , IRSAMC, Universit{\'e} de Toulouse, UPS, F-31062 Toulouse, France}
\email{nechita@irsamc.ups-tlse.fr}
\subjclass[2000]{Primary 15A52; Secondary 94A17, 94A40} 
\keywords{Random matrices, Weingarten calculus, Quantum information theory, Random quantum channel}

\begin{abstract}
We consider the image of some classes of bipartite quantum states under a tensor product of random quantum channels.
Depending on natural assumptions that we make on the states, the eigenvalues of their outputs have new properties which we describe.
Our motivation is provided by the additivity questions in quantum information theory, and we build on the idea that a Bell state sent through a product of conjugated
random channels has at least one large eigenvalue.
We generalize this setting in two directions.
First, we investigate general entangled pure inputs and show that that Bell states give the least entropy among those inputs in the asymptotic limit.
We then study mixed input states, and obtain new multi-scale random matrix models that allow to quantify the difference of 
the outputs' eigenvalues between a quantum channel and its complementary version in the case of a non-pure input.
\end{abstract}

\maketitle

\section{Introduction}
\subsection{Background} 
One of the most important questions in quantum communication theory 
is whether a quantum channel has additive properties or not \cite{bcn,cn2,Fuk06,fukuda-king,fukuda-king-moser,FukudaWolf07,hastings,hayden-winter,Shor04}.
If a channel $\Phi$ is additive for the Holevo capacity $\chi(\cdot)$
in the sense that 
$\exists N \in \N, \, \forall n \geq N $
\be\label{additivity1}
\chi(\Phi^{\otimes n}) = n \chi (\Phi)
\ee
then the classical capacity of the quantum channel equals the Holevo capacity,
giving a one-shot (non-asymptotic) formula for the classical capacity $C(\cdot)$: 
\be
C(\Phi) = \lim_{n\rightarrow \infty} \frac{1}{n} \chi (\Phi^{\otimes n}) = \chi (\Phi)
\ee
By definition,
\be
\chi(\Phi) = \max_{\{p_i, \rho_i\}} \l[S\l( \sum_ip_i\Phi(\rho_i)\r) - \sum_i p_i S\l(\Phi(\rho_i)\r)\r]
\ee
where, $\{p_i, \rho_i\}$ are ensembles and $S(\cdot)$ is the von Neumann entropy.
The above additive property was conjectured to be true for all quantum channels until 
Hastings showed \cite{hastings} existence of channels such that
\be
S_{\min} (\Phi \otimes \bar \Phi) < S_{\min} (\Phi) + S_{\min} (\bar\Phi)  
\ee 
Here, $S_{\min}(\cdot)$ is the minimal output entropy (MOE).  
Indeed, this result also gives counterexamples to (\ref{additivity1}) 
by \cite{Shor04,FukudaWolf07,Fuk06} and
as a result 
$C(\Phi) \not= \chi(\Phi)$ in general.
Note that 
\be\label{eq:MOE}
S_{\min} (\Phi) = \min_{\rho} S(\Phi(\rho)) 
\ee
where 
the minimum is take over all pure inputs (rank-one projections). 

One of the most important results in the additivity theory of quantum channels is Hastings' proof of the additivity of the minimal output entropy \cite{hastings}. 
The proof contains two disjoint parts: establishing a lower bound for the minimum output entropy for one channel, which was the main contribution of Hastings, 
and an upper bound for the minimal output entropy of the product channel. The latter part, a very delicate question, is dealt with in a very crude manner,
by simply bounding the minimum 
over the set of all bi-partite input states with the value of a single sample. 
It is thus of utmost importance to choose a state with small output entropy and Hastings uses an idea introduced by Hayden and Winter in \cite{hayden-winter}. 
This idea, putting a maximally entangled state (or a Bell state) through a product of conjugate channels has been, to our knowledge, the only example of a bi-partite input state 
with small output entropy. 

The purpose of this paper is to generalize this idea and to introduce new classes of interesting input states.  
Our results are derived using mainly the graphical calculus introduced in \cite{cn1} and the technique of moments in random matrix theory. 
Our conclusion is that the Bell state gives asymptotically, among some large classes of input states,  the output with the least entropy. 

Our results do not imply that the Bell state gives the largest violation of additivity - and proving such a result is certainly very difficult as it would require an optimal bi-partite version of
the lower bounds of \cite{hastings} - but they stand as a solid mathematical evidence towards the fact that the physically intuitive choice of the Bell state is indeed close to being optimal.

The paper is organized as follows. 
In Section \ref{sec:review} we recall some basic facts about random quantum channels, the Hayden-Winter trick and the graphical notation needed to perform 
the integrals over the unitary group in the rest of the paper. Sections \ref{sec:generalized-Bell} and \ref{sec:generalized-Bell-UU} generalize the idea of Hayden and Winter, 
both in the case of conjugate channels and in the case of identical channels. In Section \ref{sec:conjugated} we explain why the Hayden-Winter trick only works in the case of 
conjugate channels. Section \ref{sec:mixed-inputs} introduces two models of mixed bi-partite inputs that quantify the difference between a quantum channel and its conjugate 
when the input is not pure.

\section{Review on random quantum channels and unitary integration}\label{sec:review}

\subsection{Random quantum channels}
A quantum channel $\Phi: \mathbb M_{d_{in}}(\mathbb C) \to \mathbb M_{d_{out}}(\mathbb C)$
in the Stinespring's picture is described as
\be
\Phi (\rho) = \trace_{\C^{d_{env}}} \l[V \rho V^* \r]
\ee 
where
\be
V : \C^{d_{in}} \rightarrow \C^{d_{env}} \otimes \C^{d_{out}} 
\ee  
is an isometry. 
Here, $d_{in}$, $d_{out}$ and $d_{env}$ are dimensions of input space, output space and environment respectively. 
Moreover,
we induce the measure on the set of quantum channels from the Haar measure on the unitary group $\U (d_{out}\cdot d_{env})$ in the following way. We endow the set of isometries by truncating a unitary matrix distributed along the Haar measure and we consider the image measure on the set of quantum channels.

If we switch the roles of output and environment spaces 
a quantum channel becomes what is called
 its complementary channel $\Phi^C: \mathbb M_{d_{in}}(\mathbb C) \to \mathbb M_{d_{env}}(\mathbb C)$
\cite{Holevo05,KMNR}:
\be
\Phi^C (\rho) = \trace_{\C^{d_{out}}} \l[V \rho V^* \r]
\ee 
A quantum channel and its complementary channel share 
the same output eigenvalues for any pure input via Schmidt decomposition, see  \cite{Holevo05,KMNR} for details. Our interest in complementary channels is motivated by the fact that, often, the size of the environment is smaller than the output size, so output states are easier to study for $\Phi^C$.

\subsection{The Hayden-Winter trick}
As stated above, 
in proving violation of additivity one needs a small enough upper bound for the minimum output entropy of product of two quantum channels
to show that some entangled input gives an output with strictly less entropy than all the product inputs do.
The idea introduced by Hayden and Winter is that 
the Bell state gives an output with a large eigenvalue via the product of any channel and its complex conjugate. 
More precisely, consider the \emph{maximally entangled} (or the Bell) state
\be
|\phi_m \ket = \frac{1}{d_{in}}\sum_i^{d_{in}} | i \ket |i \ket 
\ee
where $\{|i\ket\}_i$ are the canonical basis vectors. 
Then, it has been shown in \cite{hayden-winter, cn1} that, for any channel $\Phi$ we have
\be \label{HWtrick}
\bra \phi_m | (\Phi \otimes \bar\Phi) (|\phi_m\ket\bra\phi_m|) | \phi_m \ket \geq \frac{d_{in}}{d_{out} \cdot d_{env}}.
\ee
This yields
a lower bound for the largest eigenvalue of the output for $\Phi \otimes \bar\Phi$.
In turn, this can provide a bound that is small enough for the MOE in order to ensure violation of additivity, using the inequality
\begin{equation}
S_{\min}(\Phi \otimes \bar \Phi) \leq S[ (\Phi \otimes \bar\Phi) (|\phi_m\ket\bra\phi_m|)].
\end{equation}

Note that in the above inequality, $\bar\Phi$ (the complex conjugate channel of $\Phi$) is defined by 
replacing $U$ by $\overline U$ in the definition of $\bar \Phi$.  

\subsection{Unitary integration}
  
  Let us start by recalling the definition of a combinatorial object of interest, the unitary Weingarten function.
\begin{definition}
The unitary Weingarten function 
$\Wg(n,\sigma)$
is a function of a dimension parameter $n$ and of a permutation $\sigma$
in the symmetric group $\S_p$. 
It is the inverse of the function $\sigma \mapsto n^{\#  \sigma}$ under the convolution for the symmetric group ($\# \sigma$ denotes the number of cycles of the permutation $\sigma$).
\end{definition}

Notice that the  function $\sigma \mapsto n^{\# \sigma}$ is invertible when $n$ is large, as it
behaves like $n^p\delta_e$
as $n\to\infty$.
Actually, if $n<p$ the function is not invertible any more, but we can
keep this definition by taking the pseudo inverse 
and the theorems below will still hold true
(we refer to \cite{collins-sniady} for historical references and further details). We shall use the shorthand notation $\Wg(\sigma) = \Wg(n, \sigma)$ when the dimension parameter $n$ is clear from context.

The function $\Wg$  is used to compute integrals with respect to 
the Haar measure on the unitary group (we shall denote by $\U(n)$ the unitary group acting on an $n$-dimensional Hilbert space). The first theorem is as follows:

\begin{theorem}
\label{thm:Wg}
 Let $n$ be a positive integer and
$i=(i_1,\ldots ,i_p)$, $i'=(i'_1,\ldots ,i'_p)$,
$j=(j_1,\ldots ,j_p)$, $j'=(j'_1,\ldots ,j'_p)$
be $p$-tuples of positive integers from $\{1, 2, \ldots, n\}$. Then
\begin{multline}
\label{bid} \int_{\U(n)}U_{i_1j_1} \cdots U_{i_pj_p}
\overline{U_{i'_1j'_1}} \cdots
\overline{U_{i'_pj'_p}}\ dU=\\
\sum_{\sigma, \tau\in \S_{p}}\delta_{i_1i'_{\sigma (1)}}\ldots
\delta_{i_p i'_{\sigma (p)}}\delta_{j_1j'_{\tau (1)}}\ldots
\delta_{j_p j'_{\tau (p)}} \Wg (n,\tau\sigma^{-1}).
\end{multline}

If $p\neq p'$ then
\begin{equation} \label{eq:Wg_diff} \int_{\U(n)}U_{i_{1}j_{1}} \cdots
U_{i_{p}j_{p}} \overline{U_{i'_{1}j'_{1}}} \cdots
\overline{U_{i'_{p'}j'_{p'}}}\ dU= 0.
\end{equation}
\end{theorem}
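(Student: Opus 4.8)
\noindent\emph{Proof strategy.} The plan is to derive both formulas from the translation invariance of the Haar measure on $\U(n)$, combined with Schur--Weyl duality. Identity \eqref{eq:Wg_diff} is the easy half: since the Haar measure is invariant under $U \mapsto e^{i\theta}U$ for every $\theta$, performing this substitution multiplies the integrand by $e^{i\theta(p-p')}$ while leaving the integral unchanged; averaging over $\theta \in [0,2\pi)$ forces the integral to vanish whenever $p \neq p'$. So from now on we take $p = p'$.

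For \eqref{bid}, I would study the linear map $\Lambda$ on $\mathrm{End}\big((\C^n)^{\otimes p}\big)$ defined by $\Lambda(X) = \int_{\U(n)} U^{\otimes p}\, X\, (U^{\otimes p})^*\, dU$. Writing $U_{ab} = \scalar{e_a}{U e_b}$, one checks that, in the basis of matrix units $E_{JJ'}$ of $\mathrm{End}\big((\C^n)^{\otimes p}\big)$, the coefficient of $E_{II'}$ in $\Lambda(E_{JJ'})$ is exactly the left-hand side of \eqref{bid}; thus it suffices to identify $\Lambda$. From Fubini and invariance of the Haar measure one records three facts: $\Lambda$ is idempotent; $\Lambda$ is self-adjoint for the Hilbert--Schmidt inner product $\scalar{X}{Y} = \trace(X^*Y)$ (substitute $U \mapsto U^*$, which preserves the measure); and the range of $\Lambda$ equals the commutant $\mathcal A_p = \{X : U^{\otimes p}X = XU^{\otimes p}\ \forall\, U \in \U(n)\}$, on which $\Lambda$ acts as the identity. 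Hence $\Lambda$ is the orthogonal projection onto $\mathcal A_p$.

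Now I invoke Schur--Weyl duality: $\mathcal A_p = \mathrm{span}\{\rho(\sigma) : \sigma \in \S_p\}$, where $\rho(\sigma) \in \mathrm{End}\big((\C^n)^{\otimes p}\big)$ permutes the tensor factors, with entries $\rho(\sigma)_{II'} = \prod_k \delta_{i_k, i'_{\sigma^{-1}(k)}}$. The Gram matrix of this spanning family is $\trace\big(\rho(\sigma)^*\rho(\tau)\big) = \trace\big(\rho(\sigma^{-1}\tau)\big) = n^{\#(\sigma^{-1}\tau)}$. For any finite family $\{v_\sigma\}$ in an inner product space with Gram matrix $G$, the orthogonal projection onto its span is $\sum_{\sigma,\tau}(G^{+})_{\sigma\tau}\,\scalar{v_\tau}{\cdot}\,v_\sigma$, where $G^{+}$ is the Moore--Penrose inverse; when $G$ is invertible this is the familiar formula, and when $n<p$ and $G$ is singular this pseudoinverse is precisely what the convention for $\Wg$ adopted above prescribes. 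Since by definition $(G^{+})_{\sigma\tau} = \Wg(n,\sigma^{-1}\tau)$, we get $\Lambda(X) = \sum_{\sigma,\tau}\Wg(n,\sigma^{-1}\tau)\,\trace\big(\rho(\tau)^*X\big)\,\rho(\sigma)$. Reading off the matrix-unit coefficients yields \eqref{bid} after relabeling the two summation variables by their inverses --- legitimate because in $\S_p$ every permutation is conjugate to its inverse, so $\Wg(n,\pi) = \Wg(n,\pi^{-1})$, which turns $\Wg(n,\sigma^{-1}\tau)$ into $\Wg(n,\tau\sigma^{-1})$.

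The genuinely non-formal ingredients are Schur--Weyl duality itself and, intertwined with it, the bookkeeping in the regime $n<p$, where the $\rho(\sigma)$ are linearly dependent, $G$ fails to be invertible, and one must justify that the Moore--Penrose inverse is the correct substitute for $G^{-1}$ (equivalently, that projecting through a maximal linearly independent subfamily of the $\rho(\sigma)$ reproduces the same operator). Everything else reduces to careful manipulation of Kronecker deltas; we refer to \cite{collins-sniady} for these points.
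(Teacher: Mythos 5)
The paper does not actually prove Theorem \ref{thm:Wg}; it quotes it as a known result with a pointer to \cite{collins-sniady}, and your argument is precisely the standard proof from that reference: realize the left-hand side of \eqref{bid} as a matrix element of the conditional expectation onto the commutant of $U^{\otimes p}$, identify that commutant via Schur--Weyl duality, and invert the Gram matrix $n^{\#(\sigma^{-1}\tau)}$ to produce the Weingarten weights. Your write-up is correct --- including the vanishing statement via $U\mapsto e^{i\theta}U$ and the class-function symmetry $\Wg(n,\sigma\tau^{-1})=\Wg(n,\tau\sigma^{-1})$ needed in the final relabeling --- and the one genuinely delicate point, namely that for $n<p$ the Moore--Penrose inverse of the singular Gram matrix coincides with the convolution pseudo-inverse adopted in the paper's definition of $\Wg$, is correctly identified and reasonably deferred to \cite{collins-sniady}.
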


Since we shall perform integration over \emph{large} unitary groups, we are interested in the values of the Weingarten function in the limit $n \to \iy$. The following result encloses all the data we need for our computations
about the asymptotics of the $\Wg$ function; see \cite{collins-imrn} for a proof.

\begin{theorem}\label{thm:mob} For a permutation $\sigma \in \S_p$, let $\text{Cycles}(\sigma)$ denote the set of cycles of $\sigma$. Then
\begin{equation}
\Wg (n,\sigma)=\prod_{c\in \text{Cycles} (\sigma )}\Wg (n,c)(1+O(n^{-2}))
\end{equation}
and 
\begin{equation}
\Wg (n,(1,\ldots ,d) ) = (-1)^{d-1}c_{d-1}\prod_{-d+1\leq j \leq d-1}(n-j)^{-1}
\end{equation}
where $c_i=\frac{(2i)!}{(i+1)! \, i!}$ is the $i$-th Catalan number.
\end{theorem}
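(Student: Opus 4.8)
\emph{Proof sketch.} The plan is to derive both statements from a single character-theoretic formula for $\Wg$. Regard $f_n\colon\sigma\mapsto n^{\#\sigma}$ as an element of the group algebra $\C[\S_p]$. It depends only on the cycle type, hence is central and acts as a scalar on each irreducible $\S_p$-module $V_\lambda$; by Schur--Weyl duality for $(\C^n)^{\otimes p}$ that scalar equals $p!\,s_\lambda(1^n)/\dim V_\lambda$, where $s_\lambda(1^n)$ is the Schur polynomial in $n$ variables all equal to $1$ --- equivalently the dimension of the polynomial $\U(n)$-module labelled by $\lambda$, which vanishes exactly when $\ell(\lambda)>n$. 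Inverting $f_n$ block by block, keeping only the blocks with $\ell(\lambda)\le n$ (this is precisely the pseudo-inverse appearing in the definition of $\Wg$ when $n<p$), and expanding the central idempotents $P_\lambda=\tfrac{\dim V_\lambda}{p!}\sum_\sigma\chi^\lambda(\sigma)\,\sigma$ in the permutation basis gives
\[
\Wg(n,\sigma)=\frac{1}{(p!)^2}\sum_{\lambda\vdash p,\ \ell(\lambda)\le n}\frac{(\dim V_\lambda)^2}{s_\lambda(1^n)}\,\chi^\lambda(\sigma)=\sum_{\lambda\vdash p,\ \ell(\lambda)\le n}\frac{\chi^\lambda(\sigma)}{\prod_{u\in\lambda}h(u)\,(n+c(u))},
\]
the second expression coming from the hook length formula $\dim V_\lambda=p!/\prod_u h(u)$ and the hook--content formula $s_\lambda(1^n)=\prod_u(n+c(u))/h(u)$, with $h(u)$ and $c(u)$ the hook length and content of the box $u$.

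For the exact single-cycle formula I would take $p=d$, $\sigma=(1,\dots,d)$ and $n\ge d$ (the only regime relevant to the asymptotics). By the Murnaghan--Nakayama rule, $\chi^\lambda$ evaluated on a $d$-cycle vanishes unless $\lambda=(d-k,1^k)$ is a hook, where it equals $(-1)^k$; for such a hook the hook lengths multiply to $d\,k!\,(d-1-k)!$ and the contents fill the interval $\{-k,\dots,d-1-k\}$. Feeding this into the character formula and multiplying through by $\prod_{-d+1\le j\le d-1}(n-j)$, the asserted identity collapses to the polynomial identity
\[
\sum_{k=0}^{d-1}(-1)^k\binom{n+d-1}{k}\binom{n-k-1}{d-1-k}=(-1)^{d-1}\binom{2d-2}{d-1},
\]
which follows from the Chu--Vandermonde convolution after the upper-negation $\binom{n-k-1}{d-1-k}=(-1)^{d-1-k}\binom{d-1-n}{d-1-k}$. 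Since $\binom{2d-2}{d-1}=d\,c_{d-1}$, the factor $d$ cancels the $d$ coming from $\prod_u h(u)$ and one is left with exactly $(-1)^{d-1}c_{d-1}\prod_{-d+1\le j\le d-1}(n-j)^{-1}$.

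For the first (asymptotic) statement two more ingredients are needed. First, for a general $\sigma\in\S_p$ whose cycles have lengths $\ell_1,\dots,\ell_r$, one must establish the leading behaviour $\Wg(n,\sigma)=n^{-(2p-r)}\prod_{i=1}^r(-1)^{\ell_i-1}c_{\ell_i-1}\,(1+O(n^{-1}))$; equivalently, the top-order coefficient of $\Wg$ is the M\"obius function of the non-crossing partition lattice, extended multiplicatively over cycles. This drops out of the character formula by inserting $s_\lambda(1^n)\sim n^{|\lambda|}/\prod_u h(u)$ and resumming over $\lambda$, or combinatorially by inducting on $p$ via the defining relation $\sum_\tau\Wg(n,\sigma\tau^{-1})\,n^{\#\tau}=\delta_{\sigma,e}$ and peeling off transpositions. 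Second, the transpose involution $\lambda\leftrightarrow\lambda'$ --- which fixes hook lengths, negates contents, and satisfies $\chi^{\lambda'}(\sigma)=\mathrm{sgn}(\sigma)\,\chi^\lambda(\sigma)$ --- turns the character formula into the functional equation $\Wg(-n,\sigma)=(-1)^p\,\mathrm{sgn}(\sigma)\,\Wg(n,\sigma)$, so the expansion of $n\mapsto\Wg(n,\sigma)$ in powers of $1/n$ uses only exponents of a single parity; hence consecutive corrections differ by $n^{-2}$ and the $O(n^{-1})$ above is in fact $O(n^{-2})$. Applying the leading-order formula to $\sigma$ and to each of its cycles $c$ separately, and using $\sum_i(2\ell_i-1)=2p-r$, one reads off $\Wg(n,\sigma)=\prod_{c\in\text{Cycles}(\sigma)}\Wg(n,c)\,(1+O(n^{-2}))$, which is the first assertion.

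The step I expect to be the genuine obstacle is the general leading-coefficient computation: showing $\lim_{n\to\infty}n^{2p-r}\Wg(n,\sigma)=\prod_i(-1)^{\ell_i-1}c_{\ell_i-1}$. Once the character expansion is available, the single-cycle identity and the parity refinement are essentially mechanical, but the resummation over all $\lambda$ (or the bookkeeping in the transposition-peeling induction) is where the work concentrates; this is precisely the asymptotic analysis carried out in \cite{collins-sniady,collins-imrn}, which one may alternatively invoke wholesale.
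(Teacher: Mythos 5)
The paper does not prove this theorem at all --- it simply cites \cite{collins-imrn} --- and your sketch is a faithful outline of exactly the argument in that reference and in \cite{collins-sniady}: the character/Schur--Weyl expansion $\Wg(n,\sigma)=\sum_{\lambda}\chi^\lambda(\sigma)/\prod_{u}h(u)(n+c(u))$, Murnaghan--Nakayama on hooks plus Chu--Vandermonde for the exact single-cycle value (your binomial identity and the cancellation of the factor $d$ both check out), and the transpose/parity functional equation to upgrade $O(n^{-1})$ to $O(n^{-2})$. The one step you leave open --- that $\lim_n n^{p+|\sigma|}\Wg(n,\sigma)=\prod_i(-1)^{\ell_i-1}c_{\ell_i-1}$ for general $\sigma$ --- is correctly flagged and is precisely what the cited works supply, so your treatment is, if anything, more complete than the paper's.
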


As a shorthand for the
quantities in Theorem \ref{thm:mob}, we introduce the function $\Mob$ on the symmetric
group. $\Mob$ is invariant under conjugation and multiplicative over the cycles; further, it satisfies
for any permutation $\sigma \in \S_p$:
\begin{equation}\label{weingartenapprox}
\Wg(n,\sigma) = n^{-(p + |\sigma|)} (\Mob(\sigma) + O(n^{-2}))
\end{equation}
where $|\sigma |=p-\# \sigma $ is the \emph{length} of $\sigma$, i.e. the minimal number of transpositions that multiply to $\sigma$. We refer to \cite{collins-sniady} for details about the function $\Mob$ but
what we have to know in this paper is that 
\be
\Mob (\sigma) \sim (-1)^{|\sigma|}
\ee
when $\sigma$ consists of disjoint transpositions.  

We finish this section by a well known lemma which we will use several times towards the end of the paper. This result is contained in \cite{nica-speicher}.
\begin{lemma}\label{lem:S_p}
The function
$d(\sigma,\tau) = |\sigma^{-1} \tau|$ is an integer valued distance on $\S_p$. Besides, it has the following properties:
\begin{itemize}
\item the diameter of $\S_p$ is $p-1$;
\item $d(\cdot, \cdot)$ is left and right translation invariant;
\item for three permutations $\sigma_1,\sigma_2, \tau \in \S_p$, the quantity $d(\tau,\sigma_1)+d(\tau,\sigma_2)$
has the same parity as $d(\sigma_1,\sigma_2)$;
\item the set of geodesic points between the identity permutation $\id$ and some permutation $\sigma \in \S_p$ is in bijection with the set of non-crossing partitions smaller than $\pi$, where the partition $\pi$ encodes the cycle structure of $\sigma$. Moreover, the preceding bijection preserves the lattice structure. 
\end{itemize}
\end{lemma}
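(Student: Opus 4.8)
The plan is to treat the four bullet points in increasing order of difficulty, the first three being essentially formal and the last one carrying the real content. Symmetry, the triangle inequality, and positivity of $d$ follow from the facts that $|\rho^{-1}|=|\rho|$ (inversion preserves cycle type, hence $\#\rho$), that a minimal transposition word for $\sigma^{-1}\tau$ concatenated with one for $\tau^{-1}\rho$ is a word for $\sigma^{-1}\rho$, and that $|\rho|=0\iff\rho=\id$. Left invariance is immediate since $(\alpha\sigma)^{-1}(\alpha\tau)=\sigma^{-1}\tau$, and right invariance uses that $|\cdot|$ is a class function: $(\sigma\alpha)^{-1}(\tau\alpha)=\alpha^{-1}(\sigma^{-1}\tau)\alpha$. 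The diameter is then $\max_{\sigma}|\sigma|=p-1$, attained exactly on the $p$-cycles. For the parity claim I would use $\mathrm{sgn}(\rho)=(-1)^{|\rho|}$, whence $(-1)^{d(\tau,\sigma_1)+d(\tau,\sigma_2)}=\mathrm{sgn}(\tau^{-1}\sigma_1)\mathrm{sgn}(\tau^{-1}\sigma_2)=\mathrm{sgn}(\sigma_1^{-1}\sigma_2)=(-1)^{d(\sigma_1,\sigma_2)}$.

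For the geodesic statement, recall that $\alpha$ is a geodesic point between $\id$ and $\sigma$ iff $|\alpha|+|\alpha^{-1}\sigma|=|\sigma|$, and note that by the parity bullet just proved the defect $|\alpha|+|\alpha^{-1}\sigma|-|\sigma|$ is a nonnegative \emph{even} integer; it equals twice the genus of the bipartite map attached to the factorization $\sigma=\alpha\cdot(\alpha^{-1}\sigma)$. I would first reduce to a single cycle: writing $\sigma=c_1\cdots c_k$ as the product of its disjoint cycles on blocks $B_1,\dots,B_k$, the genus-zero condition forces every cycle of a geodesic point $\alpha$ to lie inside a single $B_j$, so $\alpha=\alpha_1\cdots\alpha_k$ with each $\alpha_j$ a geodesic point between $\id$ and $c_j$ inside $\S_{B_j}$. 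For a single $p$-cycle $\gamma$, Biane's theorem identifies the geodesic points between $\id$ and $\gamma$ with the lattice $\mathrm{NC}(p)$ of non-crossing partitions for the cyclic order defined by $\gamma$, via $\alpha\mapsto\{\text{cycles of }\alpha\}$. Fixing a linear order compatible with the cycles of $\sigma$, and combining the two steps, one gets a bijection of the geodesic points with $\prod_j\mathrm{NC}(B_j)=\{\theta\in\mathrm{NC}(p):\theta\le\pi\}$, where $\pi$ is the cycle partition of $\sigma$.

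The lattice claim then amounts to checking that the partial order $\alpha\preceq\beta\iff|\alpha|+|\alpha^{-1}\beta|=|\beta|$ on geodesic points corresponds under this bijection to the refinement order on $\mathrm{NC}$ (with $\id\leftrightarrow$ the all-singletons partition as minimum and $\sigma\leftrightarrow\pi$ as maximum), and that meets (intersection of blocks) and joins (smallest common non-crossing coarsening) match the corresponding meet and join of geodesic points; this last verification is most transparent in the single-cycle case and then passes blockwise to products of cycles. The main obstacle is the pair of genus-zero facts underpinning the geodesic analysis: (i) that a geodesic point cannot fuse two distinct cycles of $\sigma$, and (ii) that the $\gamma$-geodesics are exactly the non-crossing permutations, with the lattice isomorphism onto $\mathrm{NC}(p)$. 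For both I would rely on the Euler-characteristic bookkeeping for transitive factorizations of a cycle (equivalently, cite \cite{nica-speicher} and Biane), the parity property established above being precisely what guarantees that the relevant defect is a nonnegative even integer, which is the combinatorial backbone of the whole argument.
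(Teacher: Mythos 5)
The paper does not actually prove this lemma: it is stated as a ``well known lemma'' and delegated wholesale to \cite{nica-speicher}, so there is no in-paper argument to compare yours against. Your outline is the standard proof one would extract from that reference, and it is correct as far as it goes. The first three bullets are handled completely and correctly: symmetry and the class-function property of $|\cdot|$ give a translation-invariant metric, the diameter is $\max_\rho|\rho|=p-\min_\rho\#\rho=p-1$, and the parity claim via $\mathrm{sgn}(\rho)=(-1)^{|\rho|}$ is exactly right. For the fourth bullet your organization is the right one --- characterize geodesic points by $|\alpha|+|\alpha^{-1}\sigma|=|\sigma|$, use the Euler-characteristic/genus bound to show a geodesic point cannot fuse distinct cycles of $\sigma$ (so everything factors blockwise), and then invoke Biane's bijection between geodesic points for a single $p$-cycle and $\mathrm{NC}(p)$ with its lattice structure. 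The one caveat is that the two facts carrying all the weight, namely the genus-zero rigidity and Biane's theorem itself, are still cited rather than proved, so your write-up is ultimately a (well-organized) reduction to the same external source the authors cite; that is entirely in keeping with how the paper treats the lemma. A small point worth making explicit if you expand this: the non-crossing condition in each block must be taken with respect to the cyclic order induced by the corresponding cycle of $\sigma$, which you do acknowledge when you fix a compatible linear order.
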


\subsection{Graphical calculus}

We recall briefly in this subsection the graphical calculus method for computing unitary integrals introduced in \cite{cn1}. For more details on this method,
we refer the reader to the paper \cite{cn1} and to other work which make use of this technique \cite{cn3,cn-entropy,collins-nechita-zyczkowski}

In the graphical calculus matrices (or, more generally, tensors) are represented by boxes.
Each box has differently shaped symbols, 
where the number of different types of them equals that of different spaces (exceptions are mentioned bellow). 
Those symbols are empty (while) or filled (black), corresponding to primal or dual spaces. 
Wires connect these symbols, corresponding to tensor contractions. A diagram is a collection of such boxes and wires and corresponds to an element of an abstract element in a tensor product space.

The main advantage of such a representation is that it provides an efficient way of computing expectation values of such tensors when some (or all) of the boxes are random unitary matrices. 
into an efficient way to implement the Weingarten calculus:
the delta functions in each summand in the RHS of (\ref{bid}) 
describes how we connect boxes. Each pair of permutations $(\sigma,\tau)$ in  (\ref{bid})
eliminate $U$ and $\overline U$ boxes and 
reconnect the wires originally connected to these boxes
to get a new graph. 

This process for a fixed permutation is called a removal and
the whole process which sums all the new graphs over the all permutations is called the graph expansion. 
Importantly, the graphical calculus works linearly and
separated components are related by tensor product,
as is assumed implicitly above.   
In this setting, the Bell state is represented by
a wire connecting two black symbols. 

We are allowed to make tensor-product of some spaces
and decompose it into different spaces 
giving them different symbols from the original set of symbols. 
In this case, we must bear in mind that
these two set of symbols represent the same product space
when we expand the graph.

\section{Generalized Bell states for $U\times \bar U$.}\label{sec:generalized-Bell}

In the seminal paper of Hastings \cite{hastings}, violation of additivity was found when the dimensions 
of input and output spaces are much larger than that of the environment space of the channel. 
We also follow this scenario, by setting $d_{out} = n$, $d_{env} = k$ and $d_{in} = tnk$.
Here, the integer $k$ and $t \in (0,1)$ are fixed parameters and we let $n$ go infinity to observe the asymptotic behavior.  
Since the output space ($\mathbb C^n$) is larger than the environment space ($\mathbb C^k$),
we investigate the 
complementary
channels so that we have to study output matrices of smaller, fixed dimensions. As it was discussed in Section \ref{sec:review}, the value of the minimum output entropy does not change, since the eigenvalues of the two partial traces of a rank one projector are the same, up to zeroes. 

We are thus interested in the following $k \times k$ random matrix:
\be
Z_n = \Phi^C \otimes \overline{\Phi}^C (|\psi_n\ket  \bra \psi_n| ) 
\ee
Here, $|\psi_n\ket$ is a fixed input vector for each $n$ and
$\Phi$ is random in the measure defined above. 

To represent the input $|\psi\ket$ in the graphical calculus
we add $A$ and $A^*$ boxes on the wires of the Bell input (See Figure \ref{AAstar}). 
\begin{figure}[htbp]
 \begin{center}
  \includegraphics[]{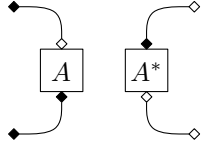}
 \end{center}
 \caption{Generalized Bell state} 
 \label{AAstar}
\end{figure}

Algebraically, we consider a sequence of inputs
\be\label{input-formula}
|\psi_n\rangle = \sum_{i,j =1}^{tnk} a_{ij} |i\rangle| j\rangle
\qquad a_{ij} \in \C
\ee
Then 
\be
A_n = \sum_{i,j =1}^{tnk} a_{ij} |i\rangle \langle j|
\ee
is used in the graphical calculus. 
In order to ensure normalized vectors, we choose $A_n \in \mathbb M_{tnk}(\mathbb C)$ to be such that
\be 
\trace [A_n A_n^*] =1
\ee

We consider first cases when $A$ is scaled properly
as a generalization of the Bell state input,
when interesting phenomena occur. 
Some of other cases will be treated later in this section.

\subsection{Well-behaved input}
In order to define well-behaved inputs, we introduce two assumptions on the asymptotic behavior of the sequence of input states $A_n$.
\\
{\bf Assumption 1:}
\be
\frac{\trace \left[ A_n \right] }{\sqrt{tnk}} = m + O\l(\frac{1}{n^2}\r)
\ee
for some $m\in\C$.
Note that the similar limit for $A_n^*$ is $\bar m$.
\\
{\bf Assumption 2:} 
\begin{equation}
\|A_n\|_\infty = O\l(\frac{1}{\sqrt{n}}\r)
\end{equation}
  
With these two assumptions, we can prove the following result. Recall that the empirical eigenvalue distribution of a matrix $Z \in \mathbb M_{k^2}(\mathbb C)$ is the probability measure
$$k^{-2} \sum_{i=1}^{k^2} \delta_{\lambda_i},$$
where $\lambda_1, \ldots, \lambda_{k^2}$ are the eigenvalues of $Z$.
\begin{theorem}
\label{thm:bell-phenomenon}
Under Assumptions 1 and 2, the empirical eigenvalue distribution of the matrix $Z_n$ converges \emph{almost surely}, as $n \rightarrow \infty$, to the probability measure
\be\label{limitdistUbar}
\frac{1}{k^2}\l[\delta_{\lambda_1} + (k^2-1) \delta_{\lambda_2}\r]
\ee
where the Dirac masses are located at
\be
\lambda_1 = t |m|^2 + \frac{1-t |m|^2}{k^2}
\quad \text{and} \quad
\quad \lambda_2 = \frac{1-t |m|^2}{k^2}.
\ee
In other words, the output state has asymptotically the following eigenvalues:
\begin{itemize}
\item $t |m|^2 + \frac{1-t |m|^2}{k^2}$, with multiplicity one;
\item $\frac{1-t |m|^2}{k^2}$, with multiplicity $k^2-1$.
\end{itemize}
\end{theorem}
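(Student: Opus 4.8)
The plan is to use the method of moments. I will show that for every integer $p\geq 1$ one has $\E\,\trace(Z_n^p)\to\lambda_1^p+(k^2-1)\lambda_2^p$ as $n\to\iy$, then upgrade this to almost sure convergence by controlling the variance, and finally invoke the fact that the limiting moment sequence --- being that of the compactly supported measure \eqref{limitdistUbar} --- determines that measure uniquely. Since each $Z_n$ is a density matrix, its spectrum lies in $[0,1]$, so almost sure convergence of all moments forces almost sure weak convergence of the empirical eigenvalue distributions to \eqref{limitdistUbar}, which is the assertion.

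First I would realize $\Phi^C$ through Stinespring: let $U\in\U(nk)$ be Haar-distributed and let $V\colon\C^{d_{in}}\to\C^k\otimes\C^n$ be the isometry formed by the first $d_{in}=tnk$ columns of $U$ (identifying $\C^{nk}\isom\C^k\otimes\C^n$), so that $Z_n=\trace_{\C^n\otimes\C^n}\bigl[(V\otimes\bar V)\,\ketbra{\psi_n}{\psi_n}\,(V^*\otimes\bar V^*)\bigr]$, the partial trace being over the two output spaces. Expanding $\trace(Z_n^p)$ and encoding the Bell part of $|\psi_n\rangle$ by the boxes $A_n,A_n^*$ as in Figure \ref{AAstar}, the diagram for $\trace(Z_n^p)$ contains $p$ copies each of $V,\bar V,V^*,\bar V^*$ and $p$ copies each of $A_n,A_n^*$; since $V$ and $\bar V^*$ carry the entries of $U$ while $V^*$ and $\bar V$ carry those of $\ol U$, there are $2p$ of each. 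Applying Theorem \ref{thm:Wg} (with parameter $2p$) turns $\E\,\trace(Z_n^p)$ into a sum over pairs $(\sigma,\tau)\in\S_{2p}\times\S_{2p}$, each summand being a product of a power $n^{\#(\text{output loops})}$, a power $k^{\#(\text{environment loops})}$, traces of words in $A_n$ and $A_n^*$ coming from the loops that pass through the input space $\C^{d_{in}}$, and the Weingarten weight $\Wg(nk,\tau\sigma^{-1})$, which we expand by \eqref{weingartenapprox} as $(nk)^{-(2p+\module{\tau\sigma^{-1}})}(\Mob(\tau\sigma^{-1})+O(n^{-2}))$.

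The heart of the argument is the asymptotic power counting in $n$ (with $k$, $t$, $p$ fixed). Using Lemma \ref{lem:S_p} --- the triangle inequality for $d(\sigma,\tau)=\module{\sigma^{-1}\tau}$, its parity property, and the identification of geodesics with non-crossing partitions --- one bounds the exponent of $n$ in each summand and isolates the configurations that survive in the limit; the conjugate ($\Phi\otimes\bar\Phi$) structure forces these to be mirror-symmetric and of non-crossing type. The two assumptions are precisely what controls the input-space loops. Assumption 2 gives, for a loop closing up a word in $A_n,A_n^*$ of total length $2q\geq 4$, a bound such as $\module{\trace[(A_nA_n^*)^q]}\leq\norm{A_n}_\iy^{2q-2}\trace[A_nA_n^*]=O(n^{-(q-1)})$ (and analogous estimates for mixed words), so that after power counting only two kinds of input loops matter: the trivial $A_nA_n^*$-contraction, contributing the scalar $\trace[A_nA_n^*]=1$, and loops through a single box, contributing $\trace[A_n]$ or $\trace[A_n^*]$. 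Assumption 1 evaluates the latter: such single-box loops occur in conjugate pairs, each pair contributing $\trace[A_n]\,\overline{\trace[A_n]}=\module{\trace[A_n]}^2=t\module{m}^2\,nk\,(1+O(n^{-2}))$, the factor $nk$ being absorbed against the Weingarten powers. Assembling the surviving, now finite and purely combinatorial, family of configurations yields $\lim_n\E\,\trace(Z_n^p)=\lambda_1^p+(k^2-1)\lambda_2^p$; equivalently, one checks that the limit is the $p$-th moment of the deterministic matrix $t\module{m}^2\,P+\frac{1-t\module{m}^2}{k^2}\,\id_{k^2}$, where $P$ is the maximally entangled (Bell) projector on $\C^k\otimes\C^k$ and $\module{m}^2$ is the asymptotic fidelity of $|\psi_n\rangle$ with the maximally entangled state. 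For the almost sure statement I would then estimate $\mathrm{Var}\,\trace(Z_n^p)$, either by a second Weingarten expansion (with $4p$ copies of $U$ and of $\ol U$), where the leading terms cancel against $(\E\,\trace(Z_n^p))^2$ leaving $O(n^{-2})$, or by the Lipschitz concentration inequality for the Haar measure on $\U(nk)$ applied to $U\mapsto\trace(Z_n^p)$; being summable, this gives almost sure convergence of each moment via Borel--Cantelli.

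The step I expect to be the main obstacle is the combinatorial power counting: pinning down exactly which pairs $(\sigma,\tau)$ contribute to leading order once the loops may be decorated by the $A_n$ and $A_n^*$ boxes, and checking that the surviving family assembles precisely into the two atoms of \eqref{limitdistUbar} with multiplicities $1$ and $k^2-1$ --- in effect, that a single "Bell-type" loop decouples and carries the entire weight $t\module{m}^2$. Tracking the $O(n^{-2})$ remainders is routine but cannot be skipped, since $\trace[A_n]$ is itself of order $\sqrt n$; this is exactly why Assumption 1 is stated with an $O(n^{-2})$ error rather than as a bare limit.
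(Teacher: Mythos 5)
Your proposal follows essentially the same route as the paper: the moment method via graphical Weingarten calculus, the geodesic power counting $|\alpha|+|\alpha^{-1}\beta|+|\beta^{-1}\delta|\geq p$ isolating the pairs $\alpha=\prod_{i\in A}(i^T,i^B)$, $\beta=\prod_{i\in B}(i^T,i^B)$ with $A\subseteq B$, Assumption 2 to suppress long words in $A_n,A_n^*$ and Assumption 1 to evaluate the surviving conjugate pairs of singleton traces as $(tnk|m|^2)^{|B|}$, and a second-moment Weingarten expansion plus Borel--Cantelli for the almost sure statement. The one step you flag but do not carry out --- assembling the geodesic contributions into the two atoms --- is precisely what the paper handles with its loop-counting lemma and the multinomial identity $\sum_{A\subseteq B\subseteq\{1,\ldots,p\}}x^{|A|}y^{|B\setminus A|}=(1+x+y)^p$, so your plan is sound and matches the published argument.
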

\begin{proof}
The proof uses the moment method and consists of two steps. First, we compute the asymptotic moments of the output density matrix $Z$ and then, by a Borel-Cantelli argument, we deduce the almost sure convergence of the spectral distribution and of the eigenvalues.

{\bf Step 1:} We calculate the limit moments of $Z_n$, using the graphical calculus, see Figure \ref{UUbar-AAstar}.
Here,
$\includegraphics{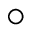}$ and $\includegraphics{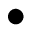}$ correspond to the $n$-dimensional environment space of $\Phi^C$, and
$\includegraphics{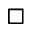}$ and $\includegraphics{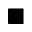}$ to the $k$-dimensional output space,
whereas $\includegraphics{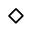}$ and $\includegraphics{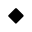}$ 
correspond to the $tnk$-dimensional input space. 
\begin{figure}[htbp]
 \begin{center}
  \includegraphics[]{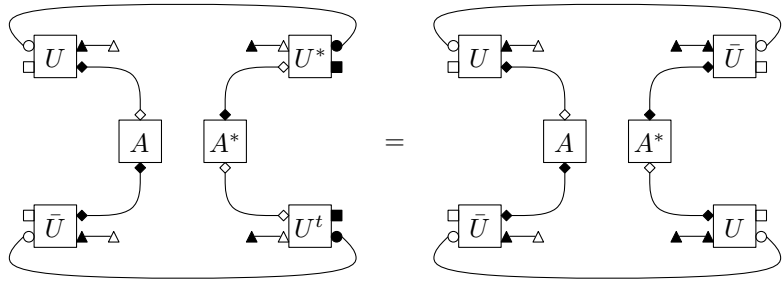}
 \end{center}
 \caption{$\Phi \otimes \bar\Phi$ with generalized input} 
 \label{UUbar-AAstar}
\end{figure}  

For fixed $p \in \N$ the Weingarten sums are indexed by pairs of permutations $(\alpha, \beta) \in \S_{2p}^2$. 
Here, we label the $U$ and the $\ol U$ boxes in the following manner: 
$1^T, 2^T, \ldots, p^T$ for the $U$ boxes of the first channel (T as ``top'') and 
$1^B, 2^B, \ldots, p^B$ for the $U$ boxes of the second channel (B as ``bottom''). 
We shall also order the labels as $\{1^T, 2^T, \ldots, p^T, 1^B, 2^B, \ldots, p^B\} \isom \{1, \ldots, 2p\}$. 
A removal $r=(\alpha, \beta) \in \S_{2p}^2$ of the random ($U$ and $\ol U$)  boxes connects the decorations in the following way:
\begin{enumerate}
	\item the white decorations of the $i$-th $U$-block are paired with the white decorations of the $\alpha(i)$-th $\ol U$ block;
	\item the black decorations of the $i$-th $U$-block are paired with the black decorations of the $\beta(i)$-th $\ol U$ block.
\end{enumerate}
Next, we introduce two fixed permutations $\gamma, \delta \in \S_{2p}$ which represent wires appearing in the diagram before the graph expansion.
The permutation $\gamma$ represents the initial wiring of the $\includegraphics{square_w.eps}$ decorations
and $\delta$ accounts for the wires between the $\includegraphics{diamond-black.eps}$ decorations connecting boxes $A$ or $A^*$.
More precisely, for all $i$,
\begin{equation} 
\label{eq:def-gamma-delta}
\gamma(i^T) = (i-1)^T, \quad \gamma(i^B) = (i+1)^B,\quad
and \quad
\delta(i^T) = i^B, \quad \delta(i^B) = i^T.
\end{equation}
A difference between this calculation and the one in \cite{cn1} is that 
the Bell state with $\includegraphics{circle_b.eps}$ in that paper turned into 
$\includegraphics{diamond-black.eps}$ with the A boxes here.
So, for these wires we get ``necklaces'' with $A$ or $A^*$ beads instead of just loops. 
Hence, we now can list of calculation elements in the graphical calculus for each $(\alpha,\beta) \in S_{2p}\times S_{2p}$:
\begin{enumerate}
	\item ``$\includegraphics{square_w.eps}$''-loops: $k^{\#(\gamma^{-1}\alpha )}$;
	\item ``$\includegraphics{circle_w.eps}$''-loops: $n^{\# \alpha}$;	
	\item ``$\includegraphics{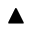}$''-loops: none;
	\item ``$\includegraphics{diamond-black.eps}$''-``necklaces'':$f(\beta)$, which is to be defined below;
	\item Weingarten weights for the $U$-matrices: $\Wg(\alpha\beta^{-1})=\Wg (\alpha^{-1} \beta)$.
\end{enumerate}
Therefore, the Weingarten graph expansion formula (Theorem 4.1 of \cite{cn1}) reads
\be \label{moments}
\E \trace [Z_n^p] = \sum_{\alpha,\beta \in S_{2p}} n^{\#\alpha} k^{\#(\gamma^{-1} \alpha)} f(\beta) \Wg (\alpha^{-1} \beta)
\ee
Here,
\be
f(\beta) &=& \prod_{c \in {\rm Cycle}(\beta^{-1}\delta)} 
\trace \left[ A^{s_{c,1}} \cdots A^{s_{c,\|c\|}}  \right]
\ee
Here, $\|c\|$ is the number of elements in $c$ and 
$s_{c,1} \ldots ,s_{c,\|c\|}$ are defined such that
\be 
s_{c,i}=
\begin{cases}
1 & \text{if the $i$th element in the cycle $c$ belongs to $T$}\\
* & \text{if the $i$th element in the cycle $c$ belongs to $B$}
\end{cases}
\ee
Note that the above function $f(\beta)$ is well-defined 
in spite of the ambiguity of $s_{c,i}$, because of the circular property of the trace. 
It is also well-behaved, in fact,
since for each $c$
\be
&&|\trace \left[ A^{s_{c,1}} \cdots A^{s_{c,\|c\|}}  \right]| 
\leq 
\| A^{s_{c,1}} \|_\infty \cdots \| A^{s_{c,\|c\| -1}}\|_\infty \cdot \| A^{s_{c,\|c\|}} \|_1 \\ \notag
&&\lesssim
\l(\frac{1}{\sqrt{n}} \r)^{\|c\|-1} \cdot\sqrt{n}
= n^{1-\|c\|/2} 
\ee
we get the following bound
\be\label{roughbound_f}
|f(\beta)|  \lesssim   n^{\#(\beta^{-1}\delta) -p } 
\ee

Hence,
\bee
 \E \trace [Z_n^p] 
\lesssim \sum_{\alpha,\beta \in S_{2p}} n^{\#\alpha} 
n^{\#(\beta^{-1}\delta) -p} 
n^{-2p - |\alpha^{-1}\beta|} 
\qquad \text{as $n \rightarrow \infty$}
\label{bound_f}
\eee
The power of $n$ in the RHS of (\ref{bound_f}) as a whole is  
\bee
2p- |\alpha| +p-|\beta^{-1}\delta| -2p - |\alpha^{-1}\beta| 
= p- (|\alpha| + |\alpha^{-1}\beta| + |\beta^{-1}\delta|) \leq 0
\eee
Here, equality holds if and only if 
$\id \rightarrow \alpha \rightarrow \beta \rightarrow \delta$ is a geodesic:
\be\label{geodesic_ab}
\alpha = \prod_{i\in A} \tau_i, \qquad \beta = \prod_{i\in B} \tau_i
\ee
where $\tau_i = (i^T,i^B)$ and $A\subseteq B\subseteq \{1, \ldots,p\}$; we refer to
\cite{cn1} for the proof.
Importantly, for these $\beta$ we have the following asymptotic behavior:
\bee\label{behaviour_f}
f (\beta) =  \l(tnk|m|^2\r)^{|\beta|}  + O \l(\frac{1}{n^2}\r)  
\eee
Note that $|\beta|=|B|$.
This implies that the power of $n$  in (\ref{moments}) in fact becomes $0$
for all the $\alpha,\beta$ which satisfy the geodesics condition  
$\id \rightarrow \alpha \rightarrow \beta \rightarrow \delta$:
\be
\#\alpha +  |B| -2p - |\alpha^{-1}\beta| =
2p - |A| + |B|  -2p - |B \setminus A| =0
\ee 
Here,  $|B| = |B \setminus A|+ |A|$. Next, we make use of the following lemma.
\begin{lemma}\label{countingloops}
For a geodesic: $\id \rightarrow \alpha \rightarrow \beta \rightarrow \delta$
we have
\be
\#(\gamma^{-1} \alpha) &=& \begin{cases} 2 & A = \emptyset \\ |A| & A \not= \emptyset \end{cases} \\
\#(\delta \beta) &=& p+|B| \\
|\alpha^{-1} \beta| &=& |B\setminus A|
\ee
\end{lemma}
We thus have the following approximation on $\E \trace [Z_n^p] $ (note that the estimate on the error order is not necessary here but will be so in Step 2) :
\begin{align}\label{momentsapprox}
& \Ex  \l[Z_n^p\r] 
= \sum_{\id \rightarrow \alpha \rightarrow \beta \rightarrow \delta}
\l( n^{\#\alpha} k^{\#(\gamma^{-1} \alpha)} f(\beta) + O\l( \frac{1}{n^2}\r) \r)\Wg (\alpha^{-1} \beta)   \\
&=  \sum_{\id \rightarrow \alpha \rightarrow \beta \rightarrow \delta}
 k^{\#(\gamma^{-1} \alpha)}\l(tk|m|^2\r)^{|\beta|} 
k^{-2p-|\alpha^{-1} \beta|} \Mob (\alpha^{-1} \beta) + O\l( \frac{1}{n^2}\r) \notag
\end{align}
Note that $|\alpha^{-1}\beta|=|B\setminus A|$.
The first equality holds because Lemma \ref{lem:S_p} implies that 
permutations $(\alpha,\beta)\in S_{2p} \times S_{2p}$ off the geodesic 
make the power of $n$ less by two or more; only even powers are allowed. 
The second inequality comes from (\ref{weingartenapprox}) and (\ref{behaviour_f}).

Hence,
\be
\lim_{n\rightarrow \infty}  \E \trace [Z_n^p]
&=& 
\underbrace{\sum_{A\subseteq  B} 
k^{|A|}\l(tk|m|^2\r)^{|B|} k^{-2p - |B\setminus A|} (-1)^{|B\setminus A|} }_{(a)} \notag\\
&-& \underbrace{\sum_{\emptyset=A \subseteq B} 
k^{|A|}\l(tk|m|^2\r)^{|B|} k^{-2p - |B\setminus A|} (-1)^{|B\setminus A|} }_{(b)} \notag\\
&+&\underbrace{\sum_{\emptyset =A \subseteq B} 
k^{2}\l(tk|m|^2\r)^{|B|} k^{-2p - |B\setminus A|} (-1)^{|B\setminus A|}  }_{(c)} \notag
\ee
The combinatorial sums above can be computed using the following multinomial identities.
\begin{lemma}\label{multinomialidentities}
\begin{align*}
\sum_{\emptyset \subseteq A \subseteq \{1, \ldots, p\}} \!\!\!\! x^{|A|} &= (1+x)^p \quad \text{and}\\
\sum_{\emptyset \subseteq A \subseteq B \subseteq \{1, \ldots, p\}} \!\!\!\!\!\!\!\! x^{|A|}y^{|B \setminus A|} &= (1+x+y)^p.
\end{align*}
\end{lemma}
We obtain that
\be
(a) &=& k^{-2p} \sum_{A\subseteq  B} \l(k \cdot tk \cdot |m|^2\r)^{|A|} 
\times \l(tk\cdot |m|^2\cdot k^{-1} \cdot (-1) \r)^{|B\setminus A|} \\
&=& k^{-2p} (1 + tk^2  |m|^2 -t |m|^2 )^p \notag.
\ee
Also,
\be
(b) = k^{-2p}\sum_{B} \l(tk \cdot |m|^2 \cdot k^{-1} \cdot (-1)\r)^{|B|} 
= k^{-2p} \l(1- t |m|^2\r)^p
\ee
and similarly,
\be
(c) = k^{2-2p}\sum_{B} \l(tk \cdot |m|^2\cdot k^{-1} \cdot (-1) \r)^{|B|} 
= k^{2-2p} \l(1- t |m|^2\r)^p.
\ee
Therefore,
\bee
\lim_{n\rightarrow \infty}  \E \trace [Z_n^p]
= \left[ \frac{1}{k^2} + \frac{(k^2-1) t|m|^2}{k^2}\right]^p 
+ (k^2-1) \left[ \frac{1}{k^2} - \frac{ t|m|^2}{k^2} \right]^p,
\label{limitmomentUbar}
\eee
which completes the proof of the first step.

\bigskip

{\bf Step 2:} 
We now move on to prove the almost sure convergence.
Since this part of proof is very similar to that of Theorem 6.3 in \cite{cn1}
we only show the sketch of proof here. 
Via the Borel-Cantelli Lemma, all we have to prove is that
the covariance series converges:
\bee\label{series}
\sum_{n=1}^\infty \Ex \l[
\l(\trace \l[ Z_n^p\r] - \Ex \trace \l[ Z_n^p \r] \r)^2\r]
= \sum_{n=1}^\infty \Ex \l[
\l(\trace \l[ Z_n^p\r] \r)^2 \r]- \l(\Ex \trace \l[ Z_n^p \r] \r)^2
 < \infty
\eee
which implies that for all $p\geq 1$  
\be
\trace \l[ Z_n^p\r]   \rightarrow (\ref{limitmomentUbar}) 
\quad \text{a.e.} \quad \text{as $n \rightarrow \infty$}
\ee 
Indeed, this shows the convergence of empirical distribution.
Also, by Carleman's condition, equation
(\ref{limitmomentUbar}) uniquely determines the measure as in (\ref{limitdistUbar}).

First, (\ref{momentsapprox}) implies that 
\begin{align}\label{moment^2}
\l(\Ex \trace \l[ Z_n^p \r] \r)^2
=  
&\l(\sum_{\id \rightarrow \alpha \rightarrow \beta \rightarrow \delta}
 k^{\#(\gamma^{-1} \alpha)}\l(tk|m|^2\r)^{|\beta|} 
k^{-2p-|\alpha^{-1} \beta|} \Mob (\alpha^{-1} \beta)\r)^2 \\
&+ O\l( \frac{1}{n^2}\r).\notag
\end{align}

We then calculate $\Ex [(\trace \l[ Z_n^p] \r)^2 ]$.
In the diagram we have two identical copies of $\trace Z_n^p$, which means that
we have $4p$ pairs of $U$ and $\overline U$ boxes.
As a result, 
removals $(\bar\alpha,\bar\beta)$ are defined for $\bar\alpha,\bar\beta \in S_{4p}$. 
However, importantly those two copies are initially separated. 
Namely, initial wires $\bar \gamma, \bar \delta \in S_{2p}\oplus S_{2p} = S_{4p}$ 
are written as direct sums:
\be
\bar \gamma = \gamma \oplus \gamma 
\quad \text{and} \quad 
\bar\delta = \delta \oplus \delta
\ee
Then, as before, we calculate the power of $n$, which is 
\be
2p- (|\bar\alpha| + |\bar\alpha^{-1}\bar\beta| + |\bar\beta^{-1}\bar\delta|) \leq 0
\ee
Here, ``$=$'' holds if and only if 
$\id \rightarrow \bar\alpha \rightarrow \bar\beta \rightarrow \bar\delta$
is a geodesic. 
Moreover, this geodesic condition implies that
$\bar\alpha$ and $\bar\beta$ can be written as
\be
\bar\alpha = \alpha_1 \oplus \alpha_2 
\quad \text{and} \quad
\bar\beta =   \beta_1 \oplus \beta_2
\ee 
Here, 
pairs $(\alpha_1, \beta_1)$ and $(\alpha_2, \beta_2)$ are 
defined as in (\ref{geodesic_ab}).

Therefore, in the diagram all removals which matter as $n\rightarrow \infty$
keep those two copies separated. 
Also, these removals have the following properties:
\begin{align*}
&\# (\bar\gamma^{-1} \bar \alpha) 
= \# (\gamma^{-1}  \alpha_1) + \# (\gamma^{-1}  \alpha_2), \\
&|\bar\beta| = |\beta_1|+|\beta_2|, \qquad
|\bar\alpha^{-1} \bar\beta |= |\alpha_1^{-1} \beta_1 | + |\alpha_2^{-1} \beta_2 |   \\
&\Mob (\bar\alpha^{-1} \bar\beta ) = \Mob (\alpha_1^{-1} \beta_1 ) \cdot \Mob(\alpha_2^{-1} \beta_2 )
\end{align*}
As before, 
we get an approximation with the error of order $1/n^2$:
\begin{align}\label{2ndmoment}
&\Ex  \l(\trace\l[ Z_n^p\r] \r)^2 
=  
\sum_{\substack{\id \rightarrow \alpha_1 \rightarrow \beta_1 \rightarrow \delta \\
\id \rightarrow \alpha_2 \rightarrow \beta_2 \rightarrow \delta}}
\Big[
 k^{\#(\gamma^{-1} \alpha_1)+\#(\gamma^{-1} \alpha_2)}
\l(tk|m|^2\r)^{|\beta_1|+|\beta_2|} \\
& \qquad k^{-4p-|\alpha_1^{-1} \beta_1|+|\alpha_2^{-1} \beta_2|} 
\Mob (\alpha_1^{-1} \beta_1)\cdot \Mob (\alpha_2^{-1} \beta_2) \Big]
+ O\l( \frac{1}{n^2}\r)\notag
\end{align}
Here, the error is of order $1/n^2$ for the same reason as before. 

Finally, we see from (\ref{moment^2}) and(\ref{2ndmoment}) that
\bee
\Ex \l[\l(\trace \l[ Z_n^p\r] \r)^2 \r]- \l(\Ex \trace \l[ Z_n^p \r] \r)^2
 = O \l( \frac{1}{n^2}\r)
\eee
which proves (\ref{series}), and finalizes the proof.
\end{proof}

\subsection{Consequence of Theorem \ref{thm:bell-phenomenon}}
We start by considering some special cases of interest where the previous theorem applies. 
\subsubsection*{Example 1: the Bell state.}

The original, non-perturbed Bell state  corresponds to the following matrix $A$:
\be
A = \frac{1}{\sqrt{tnk}} \cdot {\rm diag} \{1, \ldots, 1\}
\ee
Then,
\be
m = \lim_{n \rightarrow \infty} tnk \cdot \frac{1}{tnk} =1
\ee
Hence, we have the following limit eigenvalue distribution: 
\begin{itemize}
\item $t  + \frac{1-t }{k^2}$, with multiplicity one;
\item $\frac{1-t}{k^2}$, with multiplicity $k^2-1$.
\end{itemize}
In particular, when $t= 1/k$ (the dimension of input space is $n$), we get
\begin{itemize}
\item $\frac{1}{k}  + \frac{1}{k^2} - \frac{1}{k^3}$, with multiplicity one;
\item $\frac{1}{k^2}- \frac{1}{k^3}$, with multiplicity $k^2-1$,
\end{itemize}
recovering in this way the results of \cite{cn1}. 

Note that this method yields better bounds for the largest eigenvalue (and thus for the entropy) than a direct application of the Hayden-Winter trick, which only gives a crude bound ($\lambda_1 \geq t$) on the largest eigenvalue of the output. 

\subsubsection*{Example 2: Dephased Bell state.}
Take the input to be the maximally entangled state with phases as bellow.
\be
|\phi\ket 
= \frac{1}{\sqrt{tnk}}\,
 \sum_{j =1}^{tnk} \exp \l\{ \frac{2\pi i}{tnk} j \r\}|j\ket|j\ket
\ee
where $i^2 =-1$. 
The corresponding matrix $A$ is
\be
A = \frac{1}{\sqrt{tnk}} \,
{\rm diag} \l\{\exp \l\{ \frac{2\pi i}{tnk}  \r\},\exp \l\{ \frac{4 \pi i}{tnk}  \r\},
\ldots, 1  \r\}
\ee
so that $m=0$.
Hence, we have the following flat limit eigenvalue distribution: $k^2$ eigenvalues equal to $k^{-2}$.

\subsubsection*{Example 3: Normal matrices with asymptotic moments.}

We consider now a generalization of the previous two situations, where the matrices $A_n$ are normal and have asymptotic moments in the sense that the following limit exists:

\begin{equation}
\forall p,q \geq 0, \quad
m_{p, q} 
=	\lim_{n \to \iy} \frac{\trace \left[ A_n^p (A_n^*)^q \right] }{(tnk)^{1 - (p+q)/2} }
= \int z^p \bar{z}^q \d{\mu}(z),
\end{equation}
where $\mu$ is a compactly supported measure on the complex plane. Since we want input states to be normalized, 
we assume $m_{1, 1} = 1$.  Obviously, such matrices satisfy Assumption 1 of Theorem \ref{thm:bell-phenomenon}, with $m=m_{1,0}$. Although Assumption 2 may not be satisfied, note that one can still get the bound in equation \eqref{roughbound_f} for the $*$-moments of $A_n$ using normality and the hypothesis above; since this is the only place in the proof of Theorem \ref{thm:bell-phenomenon} when one uses Assumption 2, the result holds. 

\bigskip

The next theorem, one of the main results of the paper, is an easy consequence of Theorem \ref{thm:bell-phenomenon} and presents the usual Bell state as the unique input state 
of a conjugate product channel that yields an output with minimal entropy, within the class of well-behaved input states.

\begin{theorem}\label{thm:Bell-optimal}
Among all generalized Bell states described by a sequence of matrices $A_n$ satisfying the assumptions of Theorem \ref{thm:bell-phenomenon}, the one that achieves 
an output with minimal entropy is the usual Bell state, obtained by setting $A_n = \mathrm{I}_{tnk}/\sqrt{tnk}$.
\end{theorem}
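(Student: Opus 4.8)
The plan is to reduce the statement to an elementary one-variable monotonicity problem. By Theorem~\ref{thm:bell-phenomenon}, the limiting output spectrum of a well-behaved input $A_n$ — and hence, since there are only $k^2$ eigenvalues, they converge almost surely, and the entropy is a continuous function of them, the limiting output entropy $\lim_n S(Z_n)$ — depends on the sequence only through the scalar $s := t|m|^2$, where $m = \lim_n \trace[A_n]/\sqrt{tnk}$. Concretely, this limiting entropy equals
\[
H(s) = -\lambda_1 \log \lambda_1 - (k^2-1)\,\lambda_2 \log \lambda_2, \qquad \lambda_1 = s + \frac{1-s}{k^2}, \quad \lambda_2 = \frac{1-s}{k^2}.
\]
So it suffices to (i) identify the range of admissible $s$, (ii) show $H$ is strictly decreasing there, and (iii) determine which sequences $A_n$ realize the extreme value of $s$.

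For (i) I would use that $\trace[A_n A_n^*] = 1$ together with the Cauchy--Schwarz inequality for the Hilbert--Schmidt inner product: $|\trace[A_n]| = |\langle \mathrm{I}_{tnk}, A_n\rangle_{\mathrm{HS}}| \le \|\mathrm{I}_{tnk}\|_{\mathrm{HS}}\,\|A_n\|_{\mathrm{HS}} = \sqrt{tnk}$, so $|m| \le 1$ and $s \in [0,t]$. For (ii), rather than differentiating $H$, I would argue by majorization, which avoids any case analysis in $k$ and $t$: if $t \ge s_1 > s_2 \ge 0$ then the sorted probability vectors attached to $s_1$ and $s_2$ satisfy, for their partial sums, $P_j(s_1) - P_j(s_2) = (s_1-s_2)(k^2-j)/k^2 \ge 0$ for $1 \le j \le k^2$, with equality only at $j = k^2$; hence the $s_1$-vector majorizes the $s_2$-vector, and the two are distinct whenever $k \ge 2$. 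Strict Schur-concavity of the Shannon entropy then gives $H(s_1) < H(s_2)$ (the case $k=1$ being trivial, as the output is one-dimensional). Consequently $H$ attains its minimum on $[0,t]$ exactly at $s = t$, i.e.\ at $|m| = 1$, and this value is realized by the usual Bell input $A_n = \mathrm{I}_{tnk}/\sqrt{tnk}$, for which $m = 1$.

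For the uniqueness assertion (iii), I would analyze the equality case $|m| = 1$, which is the asymptotic saturation of the Cauchy--Schwarz bound above. Writing $A_n = c_n \mathrm{I}_{tnk} + B_n$ with $\trace[B_n] = 0$ (the orthogonal decomposition in the Hilbert--Schmidt space), Assumption~1 forces $c_n\sqrt{tnk} = m + O(n^{-2})$, hence $|c_n|^2 tnk = 1 + O(n^{-2})$; then the normalization $1 = \trace[A_n A_n^*] = |c_n|^2 tnk + \trace[B_n B_n^*]$ forces $\|B_n\|_{\mathrm{HS}}^2 = O(n^{-2}) \to 0$. Thus $A_n - (m/\sqrt{tnk})\,\mathrm{I}_{tnk} \to 0$ in Hilbert--Schmidt norm, i.e.\ any minimizing well-behaved sequence agrees asymptotically, up to the physically irrelevant global phase $m$, with the usual Bell input. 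The only slightly delicate step is (ii) — the strict monotonicity of $H$ over the whole admissible interval — and the majorization argument is what makes it clean; everything else is bookkeeping with the formula from Theorem~\ref{thm:bell-phenomenon} and elementary Hilbert--Schmidt geometry.
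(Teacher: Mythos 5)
Your proposal is correct and follows essentially the same route as the paper's (much terser) proof: entropy of the two-atom limiting spectrum is monotone in $|m|$, the normalization $\trace[A_nA_n^*]=1$ forces $|m|\leq 1$ via Cauchy--Schwarz, and equality pins down the identity matrix up to phase. Your majorization argument for the monotonicity step and your Hilbert--Schmidt decomposition for the (asymptotic) uniqueness are careful justifications of claims the paper simply asserts, and in fact your formulation of the equality case --- that a minimizing sequence need only agree with the Bell state asymptotically --- is more accurate than the paper's ``if and only if'' phrasing.
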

\begin{proof}
At fixed $t$, the largest eigenvalue of the limiting measure is an increasing function of $|m|$. Hence, the output with te least entropy is obtained for the larges achievable 
value of $|m|$. The normalization constraint $\mathrm{Tr}[A_nA_n^*] = 1$ implies that one must have $|m|\leq 1$, with equality if and only if $A_n$ is equal, up to a phase, to the identity matrix $\mathrm{I}_{tnk}/\sqrt{tnk}$.
\end{proof}

\begin{remark}
As it was noted in the introduction, the previous result does not imply that the Bell state achieves the minimum in the formula for the minimal output entropy of a product of conjugate channels. 
Our result just states that the Bell state achieves the least entropy when compared to other input states which exhibit a nice eigenvalue behaviour, in the sense of the two assumptions 
appearing before Theorem \ref{thm:bell-phenomenon}. The global minimum for the output entropy could in principle be achieved on a state not having such a behaviour. 
\end{remark}

\subsection{Ill-behaved input}
We examine in this subsection input states which do not satisfy Assumption 2. 
\\
{\bf Case 1:}
When the fixed input is a product state,
i.e., the corresponding matrix $A$ is of rank one,
we get the eigenvalue $1/k^2$ with multiplicity $k^2$. 
Since the input is a product state,
outputs are products of outputs of those two channels.
Hence the above conclusion is derived from \cite{cn3}.
\\
{\bf Case 2:}
Instead of Assumptions 1 and 2 of the previous section, we set the following conditions on the matrix $A$:
\be
{\rm rank} (A) \lesssim n^{1 -\epsilon}
\quad \text{and} \quad 
\| A \|_\infty \lesssim n^{\frac{\epsilon -1}{2}} 
\ee
Note that the first condition is equivalent to
taking the dimension of input space to be of oder $n^{1-\epsilon}$.
The second condition is similar to Assumption 2, preventing the input state from having large Schmidt coefficients.

\begin{proposition}
Under these two assumptions
the empirical eigenvalue distribution of the matrix $Z_n$ converges \emph{almost surely}, as $n \rightarrow \infty$, to the probability measure $\delta_{k^{-2}}$.
\end{proposition}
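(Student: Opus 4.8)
The plan is to mimic the moment-method proof of Theorem \ref{thm:bell-phenomenon}, but to track more carefully the effect of the rank and norm constraints on the ``necklace'' factors $f(\beta)$. First I would set up the same Weingarten graph expansion \eqref{moments}, with the same fixed permutations $\gamma, \delta \in \S_{2p}$ and the same combinatorial bookkeeping: the ``$\includegraphics{square_w.eps}$''-loops contribute $k^{\#(\gamma^{-1}\alpha)}$, the ``$\includegraphics{circle_w.eps}$''-loops contribute $n^{\#\alpha}$, the Weingarten weight is $\Wg(\alpha^{-1}\beta)$, and the diamond necklaces contribute $f(\beta) = \prod_{c \in \mathrm{Cycle}(\beta^{-1}\delta)} \trace[A^{s_{c,1}} \cdots A^{s_{c,\|c\|}}]$. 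The point is that now the bound on each trace factor changes. Since $\|A\|_\infty \lesssim n^{(\epsilon-1)/2}$ and $\mathrm{rank}(A) \lesssim n^{1-\epsilon}$, Hölder gives $|\trace[A^{s_{c,1}}\cdots A^{s_{c,\|c\|}}]| \leq \|A\|_\infty^{\|c\|} \cdot \mathrm{rank}(A) \lesssim n^{\|c\|(\epsilon-1)/2 + 1 - \epsilon}$; note the trace of a product of $\|c\|$ operators each of operator norm $\lesssim n^{(\epsilon-1)/2}$ and all supported on a common subspace of dimension $\lesssim n^{1-\epsilon}$ is at most that subspace dimension times the product of norms. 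In particular $|\trace[AA^*]| = 1$ is the $\|c\|=2$ case (saturating when $A$ is a scaled projection), and for $\|c\| \geq 3$ one gains a strictly negative power of $n$ relative to the $\|c\|=2$ situation.

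Next I would feed this into the power counting. Writing $q = \#(\beta^{-1}\delta)$ for the number of necklace cycles and grouping the $2p$ points into these cycles, one has $\sum_c \|c\| = 2p$, so $|f(\beta)| \lesssim \prod_c n^{\|c\|(\epsilon-1)/2 + 1-\epsilon} = n^{p(\epsilon-1) + q(1-\epsilon)} = n^{(1-\epsilon)(q-p)} = n^{-(1-\epsilon)|\beta^{-1}\delta|}$, since $|\beta^{-1}\delta| = 2p - q$. Combining with $\Wg(\alpha^{-1}\beta) \sim n^{-2p - |\alpha^{-1}\beta|}\Mob(\alpha^{-1}\beta)$ and $n^{\#\alpha} = n^{2p - |\alpha|}$, the total power of $n$ in a generic term of \eqref{moments} is at most
\begin{equation}
2p - |\alpha| - (1-\epsilon)|\beta^{-1}\delta| - 2p - |\alpha^{-1}\beta| \leq -\epsilon |\beta^{-1}\delta| - \big(|\alpha| + |\alpha^{-1}\beta| + |\beta^{-1}\delta|\big) + |\beta^{-1}\delta|.
\end{equation}
Using the triangle inequality $|\alpha| + |\alpha^{-1}\beta| + |\beta^{-1}\delta| \geq |\delta| $ is too lossy; instead I would argue exactly as in Theorem \ref{thm:bell-phenomenon} that the leading terms come from the geodesic configuration $\id \to \alpha \to \beta \to \delta$, on which $|\alpha| + |\alpha^{-1}\beta| + |\beta^{-1}\delta| = |\delta| = 2p$ and $\#\alpha + k$-count is controlled by Lemma \ref{countingloops}; the extra factor $n^{-\epsilon|\beta^{-1}\delta|}$ then forces $|\beta^{-1}\delta| = 0$, i.e. $\beta = \delta$, hence (on the geodesic) $\alpha = \delta$ as well, so $B = A = \emptyset$ in the notation of \eqref{geodesic_ab}. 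Only the single term $\alpha = \beta = \delta$ survives, with necklace value $f(\delta) = (\trace[AA^*])^p = 1$ and $k$-exponent $\#(\gamma^{-1}\delta) = \#(\gamma^{-1}\delta)$; one checks this equals $2$ by the $A = \emptyset$ case of Lemma \ref{countingloops} — wait, more carefully, when $\beta = \delta$ the relevant loop count is the one giving $k^2 \cdot k^{-2p} \cdot \Mob(\id) = k^{2-2p}$, whence $\lim_n \E\trace[Z_n^p] = k^{2-2p} = k^2 (k^{-2})^p$, which is exactly the $p$-th moment of $\delta_{k^{-2}}$.

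Finally, for the almost sure statement I would run the same Borel--Cantelli argument as in Step 2 of the proof of Theorem \ref{thm:bell-phenomenon}: expand $\E[(\trace Z_n^p)^2]$ over removals in $\S_{4p}$ with split initial data $\bar\gamma = \gamma \oplus \gamma$, $\bar\delta = \delta \oplus \delta$, observe that the dominant removals keep the two copies separated and factorize, and conclude $\E[(\trace Z_n^p)^2] - (\E\trace Z_n^p)^2 = O(n^{-2})$, so $\sum_n$ of it converges and $\trace Z_n^p \to k^{2-2p}$ almost surely for every $p$; Carleman's condition then pins down the limiting spectral measure as $\delta_{k^{-2}}$. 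The main obstacle is the Hölder estimate on $|\trace[A^{s_{c,1}}\cdots A^{s_{c,\|c\|}}]|$ and checking it is uniform enough over all cycle structures to yield the clean bound $|f(\beta)| \lesssim n^{-(1-\epsilon)|\beta^{-1}\delta|}$ — in particular making sure the rank bound genuinely helps for \emph{every} cycle simultaneously rather than just one, which is why the common-support observation is essential; once that is in place, the rest is a routine adaptation of the already-proven theorem.
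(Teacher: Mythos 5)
Your strategy is exactly the paper's: the same Weingarten expansion \eqref{moments}, the same H\"older-type bound on each necklace factor (your $\|A\|_\infty^{\|c\|}\cdot\mathrm{rank}(A)$ is the paper's $\|A\|_\infty^{\|c\|-1}\|A\|_1$ combined with $\|A\|_1\leq \mathrm{rank}(A)\,\|A\|_\infty$), leading correctly to $|f(\beta)|\lesssim n^{(1-\epsilon)(\#(\beta^{-1}\delta)-p)}$, followed by power counting and Borel--Cantelli. The new ingredient --- that the rank hypothesis makes every necklace factor small simultaneously --- is handled correctly.

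However, the middle of your power count contains a sign error that makes the written argument self-contradictory. For permutations of $2p$ points one has $\#(\beta^{-1}\delta)=2p-|\beta^{-1}\delta|$, so your exponent $q-p$ equals $p-|\beta^{-1}\delta|$, \emph{not} $-|\beta^{-1}\delta|$; likewise $|\delta|=p$ (a product of $p$ disjoint transpositions), not $2p$. The correct total exponent is $-|\alpha|-|\alpha^{-1}\beta|+(1-\epsilon)\bigl(p-|\beta^{-1}\delta|\bigr)\leq 0$, and equality forces $\alpha=\beta=\id$ together with $|\beta^{-1}\delta|=|\delta|=p$. Your version instead ``forces $|\beta^{-1}\delta|=0$, i.e.\ $\beta=\delta$,'' and then identifies this with ``$B=A=\emptyset$'' and with necklace value $(\trace[AA^*])^p$ --- but these statements are mutually inconsistent: in the notation of \eqref{geodesic_ab}, $\beta=\delta$ corresponds to $B=\{1,\ldots,p\}$, and $f(\delta)$ is a product over the $2p$ singleton cycles of $\delta^{-1}\delta=\id$, namely $(\trace A)^p(\trace A^*)^p=|\trace A|^{2p}$, which is not $1$ and under your hypotheses is only bounded by $n^{(1-\epsilon)p}$. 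The term that actually survives is $\alpha=\beta=\id$ (so $A=B=\emptyset$), for which $\beta^{-1}\delta=\delta$ has $p$ two-element cycles each contributing $\trace[AA^*]=1$, and $k^{\#(\gamma^{-1}\id)}=k^{\#\gamma}=k^2$, giving $k^{2-2p}$. You have conflated ``$\beta$ at the $\delta$-end of the geodesic'' with ``$\beta^{-1}\delta$ trivial''; your final answer and key estimate are right, and fixing this sign restores exactly the paper's proof.
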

\begin{proof}
The expansion of graph works in the exactly same way 
as in the proof of Theorem \ref{thm:bell-phenomenon}.
A difference is that we get a different (rough) upper-bound for $f(\beta)$ from (\ref{roughbound_f}):
\be
|f(\beta)|  \lesssim   n^{(1-\epsilon)(\#(\beta^{-1}\delta) -p)} 
\ee
which implies 
\bee
 \E \trace [Z_n^p] 
\lesssim \sum_{\alpha,\beta \in S_{2p}} n^{\#\alpha} 
n^{(1-\epsilon)(\#(\beta^{-1}\delta) -p)} 
n^{-2p - |\alpha^{-1}\beta|} 
\qquad \text{as $n \rightarrow \infty$}.
\eee
The power of $n$ in the general term of the sum above is
\be
&& p- (|\alpha| + |\alpha^{-1}\beta| + |\beta^{-1}\delta|) - \epsilon \l(p-|\beta^{-1}\delta|\r) \label{irrepower1}\\
 &=&  - |\alpha| -  |\alpha^{-1}\beta| + (1- \epsilon)\l(p- |\beta^{-1}\delta|\r) \label{irrepower2}
\ee
which is non-positive.
Indeed, it is obvious from (\ref{irrepower1}) when $p-|\beta^{-1}\delta|\geq 0$ and 
from (\ref{irrepower2}) otherwise. 
To achieve equality, we need to have $\alpha = \beta= \id$, at least.
In this case, $f(\id) = 1$ and the power of $n$ in (\ref{moments}) in fact becomes $0$:
\be
\#(\alpha) - 2p - |\alpha^{-1}\beta| = 2p - 2p -0 =0
\ee
Therefore, 
\be
\lim_{n\rightarrow \infty}  \E \trace [Z_n^p]
= k^{\#(\gamma^{-1}\alpha)} \cdot k^{-2p-|\alpha^{-1}\beta|} \Big|_{\alpha=\beta=\id} 
= k^2 \cdot (k^{-2})^p,
\ee
and we recover the announced flat limiting distribution.
\end{proof}

As a final remark, note that in this case, the Hayden-Winter trick
\be
\text{ (RHS of (\ref{HWtrick}))} \sim \frac{n^{1-\epsilon}}{n} \rightarrow 0 
\qquad \text{as $n \rightarrow \infty$} 
\ee
does not produce a useful bound for the larges eigenvalue of the output.

\section{Generalized Bell states for $U\times U$}\label{sec:generalized-Bell-UU}
We investigate in this section tensor products of two identical random quantum channels. 
Again, an input is fixed and a channel is drawn randomly 
according to
the Haar measure. 
More precisely, we are interested in the following $k \times k$ random matrix:
\be
Z_n = \Phi^C \otimes \Phi^C (|\psi_n\ket  \bra \psi_n| ).
\ee
As in the previous section, $|\psi_n\ket$ is a fixed input vector for each $n$.
The diagram of $Z_n$ is presented in Figure \ref{UU-AAstar}. 
We stud well-behaved inputs and we show that the output $Z_n$ has in this case an asymptotically flat eigenvalue distribution. 
The graph expansion method works in the same way except for
small modifications on $\gamma$ and ``necklaces''.
\begin{figure}[htbp]
 \begin{center}
  \includegraphics[]{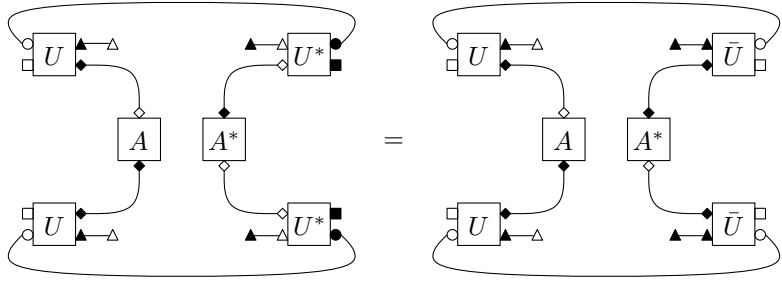}
 \end{center}
 \caption{Diagram for the output of $\Phi \otimes \Phi$ with generalized input} 
 \label{UU-AAstar}
\end{figure}  

The implications of the following theorem and a discussion on the reasons for which conjugate channels and identical channels give different results will be done in the next section. 

\begin{theorem}
\label{thm:bell-phenomenon2}
Under those two assumptions in Theorem \ref{thm:bell-phenomenon},  
the empirical eigenvalue distribution of the matrix $Z_n$ converges, as $n \rightarrow \infty$, to
$\delta_{1/k^2}$.
Hence, $Z_n$ has asymptotically a flat spectrum.
\end{theorem}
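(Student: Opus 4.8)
The plan is to run the same two-step moment argument as in Theorem~\ref{thm:bell-phenomenon}. In Step~1 I would compute $\lim_{n\to\infty}\Ex\trace[Z_n^p]$ via the graphical Weingarten calculus applied to the diagram of Figure~\ref{UU-AAstar}; in Step~2 I would upgrade convergence of moments to convergence of the empirical spectral distribution using Carleman's condition together with a Borel--Cantelli / variance estimate.

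For Step~1, the graph expansion is organized exactly as in the proof of Theorem~\ref{thm:bell-phenomenon}: there are still $2p$ pairs of random boxes, the removals are indexed by $(\alpha,\beta)\in\S_{2p}^2$, and the moment formula keeps the shape of~\eqref{moments},
\[
\Ex\trace[Z_n^p]=\sum_{\alpha,\beta\in\S_{2p}} n^{\#\alpha}\,k^{\#(\gamma^{-1}\alpha)}\,f(\beta)\,\Wg(\alpha^{-1}\beta).
\]
The only changes, dictated by the fact that the second channel is no longer conjugated, are in the fixed permutation $\gamma$ encoding the wiring of the output-space decorations --- its two strands now run with the \emph{same} orientation rather than opposite ones --- and, correspondingly, in the permutation $\delta$ and in the rule that strings $A$/$A^*$ beads along the cycles of $\beta^{-1}\delta$; this is the ``small modification on $\gamma$ and necklaces''. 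Using the rough bound on $|f(\beta)|$ obtained as in~\eqref{roughbound_f}, the exponent of $n$ in each term of the sum is $\leq 0$, and by Lemma~\ref{lem:S_p} equality forces the geodesic condition $\id\to\alpha\to\beta\to\delta$, i.e. $\alpha=\prod_{i\in A}\tau_i$ and $\beta=\prod_{i\in B}\tau_i$ with $\tau_i=(i^T,i^B)$ and $A\subseteq B\subseteq\{1,\dots,p\}$, every other removal contributing an $O(n^{-2})$ correction.

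The crux is to evaluate the resulting limiting sum over $A\subseteq B$ for the $U\times U$ geometry and to see that it degenerates to a single atom. Here I would establish the $U\times U$ counterparts of Lemma~\ref{countingloops} and of the asymptotics~\eqref{behaviour_f}: with the co-oriented $\gamma$, the permutation $\gamma^{-1}\alpha$ is two co-oriented $p$-cycles into which the transpositions $\{\tau_i\}_{i\in A}$ have been inserted, so that $\#(\gamma^{-1}\alpha)$ stays bounded --- it depends only on the parity of $|A|$ --- instead of being equal to $|A|$, and the modified necklace weight $f(\beta)$ on the geodesic no longer carries the growing factor $(tnk|m|^2)^{|B|}$ that drove the large eigenvalue in the conjugate case. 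Feeding these into the sum and collapsing it with the multinomial identities of Lemma~\ref{multinomialidentities}, together with~\eqref{weingartenapprox} for the $\Mob$ weights, the contributions of all non-trivial removals cancel in the limit, so that $\lim_{n\to\infty}\Ex\trace[Z_n^p]=k^{\#(\gamma^{-1})}\cdot k^{-2p}=k^{2-2p}$ for every $p\geq 1$; these are exactly the moments of $\delta_{1/k^2}$. This collapse is the quantitative reason, examined in Section~\ref{sec:conjugated}, why the Hayden--Winter trick needs conjugate channels.

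For Step~2, Carleman's condition identifies the moment sequence $(k^{2-2p})_{p\geq 0}$ with the measure $\delta_{1/k^2}$, and the variance bound $\Ex[(\trace Z_n^p)^2]-(\Ex\trace Z_n^p)^2=O(n^{-2})$ is obtained word for word as in Step~2 of Theorem~\ref{thm:bell-phenomenon}: the doubled diagram consists of two \emph{a priori} separated copies, the geodesic removals surviving in the limit keep them separated, and summing the series~\eqref{series} yields almost sure convergence of the empirical eigenvalue distribution. I expect the main obstacle to be the bookkeeping in the crux rather than anything conceptual: correctly transcribing how $\gamma$, $\delta$ and the necklace weight change when the second channel is identical rather than conjugate, and then checking that the sum over $A\subseteq B$ really does telescope to the single $m$-independent value $k^{2-2p}$.
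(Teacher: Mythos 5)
Your overall strategy --- moment method via the graphical Weingarten calculus in Step 1, Carleman plus a Borel--Cantelli variance estimate in Step 2 --- is exactly the paper's, and you correctly identify the physical crux: since $A$ is wired only to $U$ boxes and $A^*$ only to $\overline U$ boxes, every necklace carries a balanced number of $A$'s and $A^*$'s, so the necklace weight can no longer produce the growing factor $\l(tnk|m|^2\r)^{|\beta|}$ of \eqref{behaviour_f}. However, your description of the endgame is wrong in a way that matters. You assert that the removals surviving at order $n^0$ are still indexed by the full geodesic $\id\to\alpha\to\beta\to\delta$, i.e.\ by pairs $A\subseteq B\subseteq\{1,\dots,p\}$, and that the resulting sum ``cancels'' or ``telescopes'' to $k^{2-2p}$ via Lemma \ref{multinomialidentities}. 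No such cancellation occurs, because there is no sum left to collapse. In the $U\otimes U$ geometry the necklaces are indexed not by the cycles of $\beta^{-1}\delta$ but by the blocks of a partition $\hat\beta$ of $\{1,\dots,p\}$ induced by $\beta$; each block $b$ contributes a balanced trace $\trace\l[A^{s_{b,1}}(A^*)^{t_{b,1}}\cdots\r]$ of order $n^{1-\|b\|}$, whence $g(\beta)\lesssim n^{\#\hat\beta-p}$ with $\#\hat\beta\leq p$. The total exponent of $n$ is then $-p-\l(|\alpha|+|\alpha^{-1}\beta|\r)+\#\hat\beta\leq -p-|\beta|+\#\hat\beta\leq 0$, and equality forces $\alpha=\beta=\id$ outright. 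A single removal survives, giving $k^{\#\tilde\gamma}\cdot k^{-2p}=k^{2-2p}$ directly; neither Lemma \ref{multinomialidentities}, nor a $U\times U$ analogue of Lemma \ref{countingloops}, nor your parity claim for $\#(\tilde\gamma\alpha)$ ever enters.

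Put differently: the terms with $B\neq\emptyset$ that you expect to ``cancel'' are in fact each individually $o(1)$, because the $n^{|B|}$ that the conjugate-case exponent count budgeted for them via \eqref{behaviour_f} is simply absent here. So the ``main obstacle'' you anticipate (verifying a telescoping identity over $A\subseteq B$) does not exist, while the step you leave as a placeholder --- writing down the correct necklace functional $g(\beta)$ for co-oriented strands, where $\delta$ plays no role and the combinatorial object is a partition of $\{1,\dots,p\}$ rather than $\text{Cycles}(\beta^{-1}\delta)$ --- is precisely where the proof lives, since it is what upgrades the exponent bound from $p-(|\alpha|+|\alpha^{-1}\beta|+|\beta^{-1}\delta|)\leq 0$ to the strictly more restrictive $-p-|\beta|+\#\hat\beta\leq 0$. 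With that bookkeeping done (and the output wiring taken to be $\tilde\gamma(i^T)=(i+1)^T$, $\tilde\gamma(i^B)=(i+1)^B$, contributing $k^{\#(\tilde\gamma\alpha)}$), your Step 2 goes through verbatim. The proposal is therefore salvageable and lands on the right answer, but as written its Step 1 would not compile into a proof.
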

\begin{proof}
We introduce a fixed permutations $\tilde\gamma \in \S_{2p}$ which  
represents the initial wiring of the $\includegraphics{square_w.eps}$ decorations.
More precisely, for all $i$,
\begin{equation} 
\label{eq:def-tildegamma}
\tilde\gamma(i^T) = (i+1)^T, \quad \tilde\gamma(i^B) = (i+1)^B
\end{equation}
The following is the list of calculation elements in the graphical calculus 
for each $(\alpha,\beta) \in S_{2p}\times S_{2p}$:
\begin{enumerate} 
	\item ``$\includegraphics{square_w.eps}$''-loops: $k^{\#(\tilde\gamma\alpha )}$;
	\item ``$\includegraphics{circle_w.eps}$''-loops: $n^{\# \alpha}$;
	\item ``$\includegraphics{triangle-black.eps}$''-loops: none; 
	\item ``$\includegraphics{diamond-black.eps}$''-``necklaces'':$g(\beta)$, which is to be defined below;
	\item Weingarten weights for the $U$-matrices: $\Wg(\alpha\beta^{-1})= \Wg(\alpha^{-1} \beta) $.
\end{enumerate}
We calculate  
\be 
\E [ \trace Z_n^p] = \sum_{\alpha, \beta \in S_{2p}} n^{\# \alpha} k^{\#(\tilde\gamma\alpha)} g(\beta) \Wg(\alpha^{-1} \beta) 
\ee
The above $g(\beta)$ is defined as follows. 
Set a relation:
\be
i \sim j \Leftrightarrow \beta(i^{x}) =  k^T \quad \text{and}\quad \beta(i^{y}) =  k^B  
\quad (x,y =  T \,\text{or}\, B)
\ee
and make the smallest partition $\hat \beta$ on $\{1,\cdots,p\}$ such that any two numbers with the above relation belong to the same block.
Here, each block corresponds to a ``necklace''.
Note that each ``necklace'' carries the same number of $A$ and $A^*$
because $A$ is connected only to $U$ and $A^*$ only to $\overline U$.
Hence, $g(\beta)$ reads
\be
g(\beta) = \prod_{b \in {\rm Blocks}(\hat\beta)} 
\trace \left[A^{s_{b,1}}(A^*)^{t_{b,1}} \cdots A^{s_{b,\|b\|}} (A^*)^{t_{b,\|b\|}} 
\right]
\ee
As before, $\|b\|$ is the cardinality of the block $b$
and $s_{b,i},t_{b,i} = 1 \,\text{or}\, *$
 depending on how they are connected. 
We upper-bound $g(\beta)$ similarly as in the previous section.
Since
\be
\trace \left[A^{s_{b,1}}(A^*)^{t_{b,1}} \cdots A^{s_{b,\|b\|}} (A^*)^{t_{b,\|b\|}} \right]
\lesssim n^{1- \|b\|} 
\qquad \text{as $n \rightarrow \infty$}
\ee
we have a rough bound for $g(\beta)$:
\be
g(\beta)
\lesssim n^{\#(\hat \beta)-p } 
\qquad \text{as $n \rightarrow \infty$}
\ee
Here, $\#(\hat \beta)$ is the number of blocks in the relation induced by $\beta$.

Hence we have
\bee
\E \trace [Z_n^p] \lesssim
\sum_{\alpha, \beta \in S_{2p}} n^{\# \alpha} \cdot
n^{\#(\hat \beta)-p } \cdot
n^{-2p - |\alpha^{-1}\beta|} 
\qquad \text{as $n \rightarrow \infty$}
\eee 
The power of $n$ as a whole in the RHS is 
\be
& &2p- |\alpha|+ \#(\hat \beta) -p -2p - |\alpha^{-1}\beta|
= -p - (|\alpha| + |\alpha^{-1}\beta|) + \#(\hat\beta) \\
&\leq& -p- |\beta|+ \#(\hat\beta) \leq 0 \notag
\ee
Note that $\#(\hat\beta) \leq p$. 
It is easy to see that $=$ holds if and only if
 $\id \rightarrow \alpha \rightarrow \beta $ is a geodesic, and $|\beta|=0$ and $\#(\hat\beta)=p$.
I.e., $\alpha = \beta = \id$.
Therefore,
\be
\lim_{n \rightarrow \infty}\E \trace [Z_n^p] \sim k^{\#(\tilde \gamma)} \cdot 1 \cdot k^{-2p} 
 = k^{2-2p} = k^2 \cdot \left(\frac{1}{k^2}\right)^p,
\ee
proving the flatness of the limiting distribution.
Almost sure convergence is proven similarly as in the proof of Theorem \ref{thm:bell-phenomenon}, we leave the details to the reader.
\end{proof}

\section{What distinguishes $U\otimes \bar U$ in the graphical calculus}\label{sec:conjugated}

As we have seen in the preceding sections, conjugate channels ($U \otimes \overline U$) and identical channels 
($U \otimes U$) yield completely different behaviors. 
In this section we will explain the difference more intuitively in the graphical calculus framework
and
derive limit eigenvalue distributions in some other cases of theoretical interest:
$U\otimes U^*$ and $U \otimes U^T$. 

\subsection{Conjugate versus identical channels}
In order to compare $\Phi \otimes \bar\Phi$ and $\Phi \otimes \Phi$
we go back to Figure \ref{UUbar-AAstar} and Figure \ref{UU-AAstar} but
 $A$ will be replaced  by $\tilde A = \sqrt{n} A$ so that 
``necklaces'' are proportional to loops in $n$. 
Of course, this rescaling produces 
a multiplicative
constant $1/n$ in the graph, 
which appears as an exponent of $-p$ in (\ref{L-formula}).  

The $U \otimes U$ case is represented in Figure \ref{page2-2}.
Figure \ref{page2-2-1} shows 
the graphical calculus of $n \times \Phi^C \otimes \Phi^C (|\psi_n\ket\bra \psi_n|)$ and
we find the leading wires in Figure \ref{page2-2-2}.
Remember that we let $n$ go to the infinity. 
\begin{figure}[htbp] 
  \centering
  \subfloat[Rescaled input]{\label{page2-2-1}\includegraphics{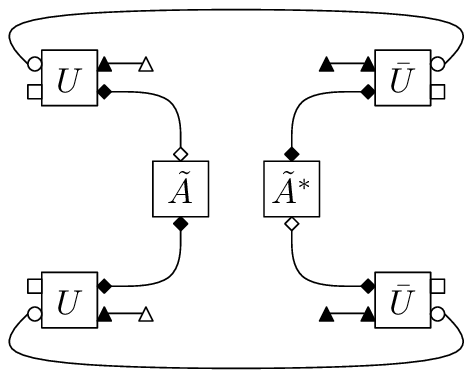}}      
\hfill     
  \subfloat[Leading wires]{\label{page2-2-2}\includegraphics{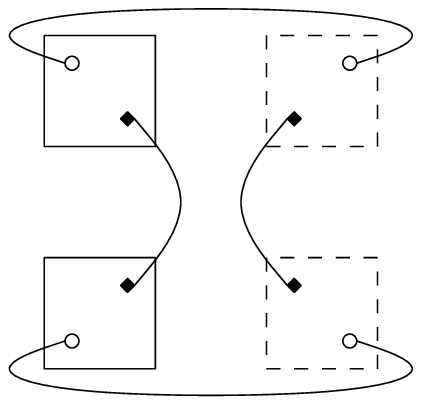}}
  \caption{Diagram for $\Phi^C \otimes \Phi^C (|\psi_n\ket\bra \psi_n|)$ 
: $U \otimes U$-case}
  \label{page2-2}
\end{figure}
The same idea is applied to $U \otimes \bar U$ 
to get the figure \ref{page2-1}.
\begin{figure}[htbp]
  \centering
  \subfloat[Rescaled input]{\label{page2-1-1}\includegraphics{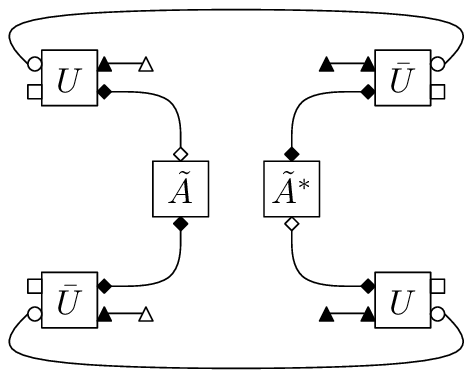}}      
\hfill          
  \subfloat[Leading wires]{\label{page2-1-2}\includegraphics{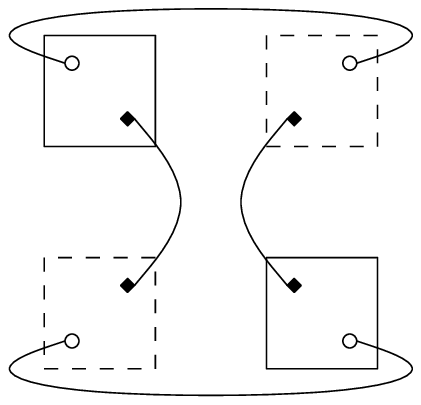}}
  \caption{Diagram for $\Phi^C \otimes \bar\Phi^C (|\psi_n\ket\bra \psi_n|)$
: $U \otimes \bar U$-case}
  \label{page2-1} 
\end{figure}

Fix $\alpha,\beta \in S_{2p}$
and 
let $L(\alpha,\beta)$ be the number of loops (and necklaces).  
Then, the removal $(\alpha,\beta)$ gives the following upper bound for the power of $n$:
\bee\label{L-formula}
\underbrace{-p}_{\text{rescaling}} \;
\underbrace{+L(\alpha,\beta)}_{\text{loops}} \;
\underbrace{-2p-|\alpha^{-1} \beta|}_{\Wg(\alpha^{-1} \beta)} 
= -3p - |\alpha^{-1} \beta| +L(\alpha,\beta) 
\eee

In the case of $U \otimes U$, it is easy to see from Figure \ref{page2-2-2} 
that $L (\alpha,\beta)\leq 3p$
and equality holds if and only if $\alpha = \beta = \id$.
Note that, in this figure, $U$ and $\bar U$ correspond to a solid square and a dotted square, respectively,
so that same-coloured symbols from different kinds of squares will be joined through removals $(\alpha,\beta)$. 
Hence, the necessary condition for (\ref{L-formula}) to be $0$ is 
$\alpha = \beta = \id$.

On the other hand, in case of $U \otimes \bar U$ we have a different bound: $L (\alpha,\beta)\leq 4p$.
This is clear from Figure \ref{page2-1-2} in a similar way. 
Note that equality holds if and only if $\alpha = \id$ and $\beta = \delta$.
In this case, $|\alpha^{-1} \beta| = p$ and $(\ref{L-formula}) =0$.
In order for $(\ref{L-formula})$ to be $0$,
$L (\alpha,\beta)$ may be any number between $3p$ and $4p$ including the edges.
This gap by $p$ gives room for some other $\alpha$'s and $\beta$'s than $\id$ to survive
as $n \rightarrow \infty$.

\subsection{Two different models: $U \otimes  U^*$ and $U \otimes U^T$}
We have so far investigated on
 the pairs $U \otimes \overline U$ and $U \otimes U$.
So, it is natural to ask what would happen for $U \otimes  U^*$ and $U \otimes U^T$.
Here, we again get the flat limiting distribution. 
First, see Figure \ref{page2-left}  where
$n\times \Phi^C \otimes (\Phi^*)^C (|\psi_n\ket  \bra \psi_n| )$ and 
$n\times\Phi^C \otimes (\Phi^T)^C (|\psi_n\ket  \bra \psi_n| )$ are expressed in the graphical calculus. 
Here, $|\phi_n \ket$ is as in (\ref{input-formula}) but 
we set $t=1/k$, i.e. the dimension of input space is exactly $n$. 
This is for convenience as is clear in the Figure \ref{page2-right} below.
\begin{figure}[htbp]
  \centering
  \subfloat[$U \otimes U^*$]{\label{page2-3-1}\includegraphics{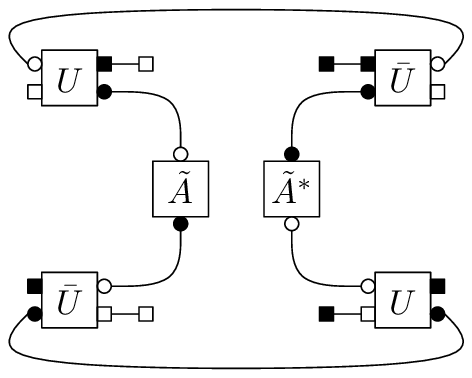}}      
\hfill          
  \subfloat[$U \otimes U^T$]{\label{page2-4-1}\includegraphics{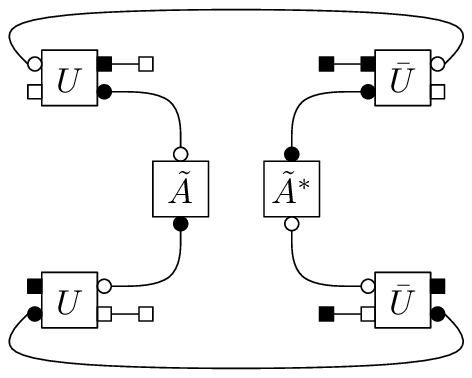}}
  \caption{Diagrams for $U \otimes U^*$ and $U \otimes U^T$-cases }
  \label{page2-left} 
\end{figure}

We again turn Figure  \ref{page2-left} 
into Figure \ref{page2-right} where only the leading wires are drawn. 
\begin{figure}[htbp] 
  \centering
  \subfloat[$U \otimes U^*$]{\label{page2-3-2}\includegraphics{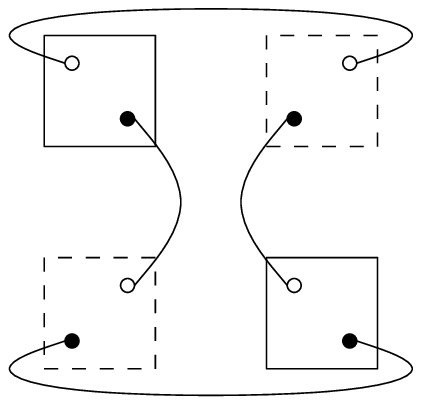}}      
\hfill          
  \subfloat[$U \otimes U^T$]{\label{page2-4-2}\includegraphics{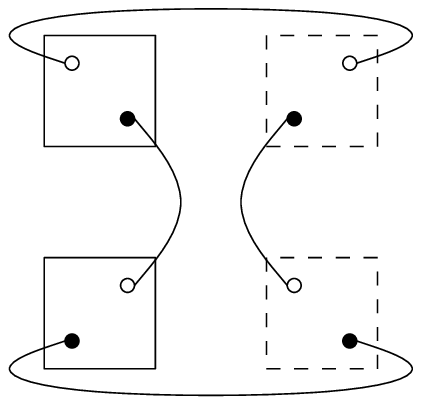}}
  \caption{Leading wires for $U \otimes U^*$ and $U \otimes U^T$-cases }
  \label{page2-right} 
\end{figure}
Again, we have the tight condition $L(\alpha,\beta) \leq 3p$ for those two cases,
which implies that only pair $\alpha =\beta = \id$
survives as $n \rightarrow \infty$. 
Hence, we have the following result.
\begin{theorem}
Under those two assumptions in Theorem \ref{thm:bell-phenomenon},  
the empirical eigenvalue distribution of the matrix $Z_n$ converges 
\emph{almost surely} as $n \rightarrow \infty$ to
$\delta_{1/k^2}$.
Here, $Z_n$ is defined in $U\otimes U^*$ and $U\otimes U^T$ models 
in a similar way as for  $U\otimes \overline U$ and $U\otimes U$ models.
\end{theorem}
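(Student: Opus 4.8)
The plan is to mirror the proof of Theorem \ref{thm:bell-phenomenon2} as closely as possible, since the only thing that changes when we pass from $U \otimes U$ to $U \otimes U^*$ or $U \otimes U^T$ is the \emph{wiring} of the initial diagram, and this affects only the fixed permutations entering the Weingarten graph expansion, not the overall structure of the moment estimate. First I would set up the graph expansion for $\E \trace[Z_n^p]$ in each of the two models, using the rescaled input $\tilde A = \sqrt n A$ so that necklaces behave like loops, which introduces the global $n^{-p}$ factor recorded in \eqref{L-formula}. As in Section \ref{sec:generalized-Bell-UU}, the summand attached to a removal $(\alpha, \beta) \in \S_{2p}^2$ has $n$-exponent bounded by $-3p - |\alpha^{-1}\beta| + L(\alpha,\beta)$, where $L(\alpha,\beta)$ counts the total number of loops and necklaces produced by the removal.

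The key step is the combinatorial bound $L(\alpha,\beta) \le 3p$ for both the $U \otimes U^*$ and the $U \otimes U^T$ models, with equality iff $\alpha = \beta = \id$; this is exactly what Figure \ref{page2-right} is meant to make visible. Concretely, I would argue that in each of these two cases the leading-order wiring does not exhibit the ``doubling'' of the white-decoration loop count that occurs in the $U \otimes \bar U$ case (where $L \le 4p$), because here the relevant decorations of the two channels are of the \emph{same} type rather than conjugate, so removals must join same-coloured symbols coming from the two different kinds of squares, which is the rigid situation. Thus the three contributions — the $n^{\# \alpha}$ loops through the environment, the $k$-loops through the output, and the necklaces — can each attain their maximal count only when $\alpha$ and $\beta$ are trivial. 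Once $L(\alpha,\beta) \le 3p$ is established with the stated equality case, the $n$-exponent $-3p - |\alpha^{-1}\beta| + L(\alpha,\beta)$ is $\le 0$ and vanishes only at $\alpha = \beta = \id$, exactly as in the proof of Theorem \ref{thm:bell-phenomenon2}.

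Having isolated the surviving term, I would evaluate it: at $\alpha = \beta = \id$ one has $f(\id) = 1$ (equivalently $g(\id) = 1$, using $m_{1,1}=1$ and the normalization $\trace[A_nA_n^*]=1$), the environment contributes $n^{\#\id} = n^{2p}$, which together with the rescaling $n^{-p}$ and the Weingarten factor $n^{-2p}\Mob(\id) = n^{-2p}$ gives total $n$-power $0$, and the output-space loops contribute $k^{\#(\cdot)} = k^{2-2p} \cdot (\text{constant})$ as in the $U \otimes U$ computation — here the choice $t = 1/k$ is used precisely to keep this bookkeeping clean, as remarked around Figure \ref{page2-right}. One concludes $\lim_{n\to\infty} \E\trace[Z_n^p] = k^{2} (k^{-2})^{p}$, which are the moments of $\delta_{1/k^2}$; Carleman's condition identifies the limit measure uniquely. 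Finally, almost sure convergence follows by the Borel--Cantelli argument of Step 2 of Theorem \ref{thm:bell-phenomenon}: one computes $\E[(\trace Z_n^p)^2]$ by a graph expansion over $\S_{4p}^2$ with block-diagonal initial wiring, observes that the surviving removals keep the two copies separated, and reads off $\E[(\trace Z_n^p)^2] - (\E\trace Z_n^p)^2 = O(n^{-2})$, which is summable.

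The main obstacle is the purely graphical claim $L(\alpha,\beta) \le 3p$ with its equality case for the $U \otimes U^*$ and $U \otimes U^T$ diagrams: unlike in the $U \otimes U$ case treated earlier, one must check that transposition (which swaps the roles of the two black/white decoration pairs of a box) does not open up the same slack of size $p$ that complex conjugation does in the $U \otimes \bar U$ model. I expect this to reduce, after tracing through the wiring, to the statement that the relevant fixed permutation governing the white-square loops still forces $\#(\tilde\gamma\alpha) \le \text{(maximal value)}$ only at $\alpha=\id$, together with the analogous rigidity for the necklaces coming from $\beta$; once these two rigidity statements are checked against Figure \ref{page2-right}, the rest is bookkeeping identical to the $U \otimes U$ proof.
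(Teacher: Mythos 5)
Your proposal follows essentially the same route as the paper: the rescaled graph expansion with exponent $-3p - |\alpha^{-1}\beta| + L(\alpha,\beta)$, the combinatorial bound $L(\alpha,\beta)\leq 3p$ with equality only at $\alpha=\beta=\id$, the evaluation of the surviving term giving $k^2(k^{-2})^p$, and Borel--Cantelli for almost sure convergence. The one step you flag as an obstacle --- verifying $L(\alpha,\beta)\leq 3p$ and its rigidity from the leading-wire diagrams for $U\otimes U^*$ and $U\otimes U^T$ --- is exactly the step the paper itself settles only by inspection of Figure \ref{page2-right}, so your argument is at the same level of detail as the published proof.
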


\section{Non-pure input states for $\Phi \otimes \bar \Phi$}\label{sec:mixed-inputs}

It is a known fact, following from convexity considerations, that the minimum output entropy of a quantum channel is attained on pure states. However, an interesting by-product of considering mixed inputs is the fact that a product channel $\Phi \otimes \bar \Phi$ and a product of complementary channels $\Phi^C \otimes \bar \Phi^C$ may have outputs with different spectra. This was obviously not the case for pure inputs, since the partial tracing of a pure state on one of the other space does not alter the non-zero spectrum of the resulting density matrix. In this section we consider mixed input states for products of conjugated channels and we study the asymptotic eigenvalue distribution for the output. Our examples do not outperform the Bell state, but some interesting theoretical properties are derived. In the final section, we introduce a new class of \emph{random} input states, correlated to the channel.

\subsection{Mixing a pure Bell state}

We start by considering a fixed-rank perturbation of the Bell state considered in Section \ref{sec:generalized-Bell}. As an input for the product of conjugate channels we consider a state
$$\mathbb M_{n^2}(\mathbb C) \ni X_\text{mixed} = \frac{\mathrm I_l}{l} \otimes |\phi_{n/l}\rangle\langle\phi_{n/l}| \otimes \frac{\mathrm I_l}{l}.$$
The input matrix $X_\text{mixed}$ has rank $l^2$ and we recover the results for the usual Bell states as a particular case by letting $l=1$.

\begin{figure}[htbp]
\centering
\includegraphics{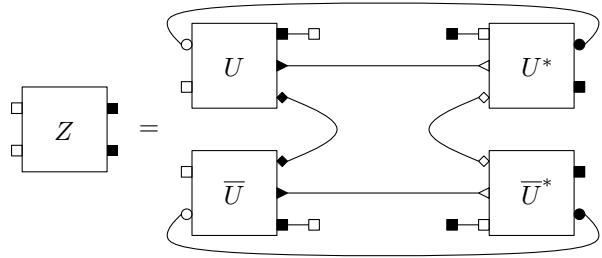}
\caption{Mixed input state for conjugated channels}
\label{fig:mixed-Bell-input}
\end{figure}

We investigate first the product of the direct channels, while complementary channels are discussed in Theorem \ref{thm:mixed-Bell-complementary}. 
This last theorem is the highlight of this section, the output state having in this case \emph{three} different types of eigenvalues, of different multiplicities. 
Note that in this section, for the sake of simplicity, we state our convergence result not in the almost sure setting as we did before, but in the sense of the weak convergence of the empirical 
spectral distribution of output matrices toward some limit deterministic probability measure.

\begin{theorem}\label{thm:mixed-Bell}
In the asymptotic regime $n \to \infty$, $k,l$ fixed, the random matrix  $\mathbb M_{k^2}(\mathbb C) \ni Z = [\Phi^C \otimes \bar \Phi^C](X_\text{mixed})$ has limiting spectral distribution 
$$\frac{1}{k^2}\delta_{\lambda_1} + \frac{k^2-1}{k^2}\delta_{\lambda_2},$$
where
\begin{align*}
\lambda_1 &= \frac{1}{kl^2} + \frac{1}{k^2} - \frac{1}{k^3l^2}\\
\lambda_2 &= \frac{1}{k^2} - \frac{1}{k^3l^2}.
\end{align*}
\end{theorem}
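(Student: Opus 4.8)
The plan is to rerun the moment method from the proof of Theorem~\ref{thm:bell-phenomenon}, now fed with the mixed input of Figure~\ref{fig:mixed-Bell-input}. Since the statement only asks for weak convergence of the empirical spectral distribution of the $k^2\times k^2$ density matrix $Z$, it suffices to compute $\lim_{n\to\infty}\E\,\trace[Z^p]$ for every $p\in\N$ and to recognise the result as the moments of $\frac{1}{k^2}\delta_{\lambda_1}+\frac{k^2-1}{k^2}\delta_{\lambda_2}$; this measure is finitely supported, hence determined by its moments (Carleman), so no variance estimate is needed (the one from Step~2 of Theorem~\ref{thm:bell-phenomenon} nevertheless applies verbatim and would upgrade the conclusion to convergence in probability). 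For the diagram, I would write each input factor $\C^n$ of a channel as $\C^l\otimes\C^{n/l}$: on top of the decorations used in Theorem~\ref{thm:bell-phenomenon} for the $n$-dimensional traced-out space of $\Phi^C$ and its $k$-dimensional kept output, the black input decorations split into two kinds --- the $\C^{n/l}$ ones, wired by the Bell block $|\phi_{n/l}\rangle$ exactly as in Theorem~\ref{thm:bell-phenomenon} (fixed wiring $\delta$ as in (\ref{eq:def-gamma-delta}), every bead equal to $\mathrm{I}_{n/l}/\sqrt{n/l}$), and the $\C^l$ ones, wired in short local loops by the two blocks $\mathrm{I}_l/l$ (fixed wiring $\delta_l=\id\in\S_{2p}$ with the natural labelling, each block contributing a scalar $1/l$, and $2p$ such blocks occurring in $\trace[Z^p]$).

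Expanding the $2p$ pairs of $U/\overline U$ boxes (Theorem~4.1 of \cite{cn1}) then gives
\[
\E\,\trace[Z^p]=l^{-2p}\sum_{\alpha,\beta\in\S_{2p}}n^{\#\alpha}\,k^{\#(\gamma^{-1}\alpha)}\,l^{\#(\beta^{-1}\delta_l)}\,f_{n/l}(\beta)\,\Wg(nk,\alpha^{-1}\beta),
\]
with $f_{n/l}(\beta)=\prod_{c\in\mathrm{Cycle}(\beta^{-1}\delta)}\trace\big[(\mathrm{I}_{n/l}/\sqrt{n/l})^{\|c\|}\big]=(n/l)^{\#(\beta^{-1}\delta)-p}$, the ``necklace'' function with all beads equal. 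Because $f_{n/l}$ carries the same power of $n$ as the bound (\ref{roughbound_f}) and the $l$-loops and the $l^{-2p}$ prefactor are $O(1)$ in $n$, the power-of-$n$ analysis is word for word that of Theorem~\ref{thm:bell-phenomenon}: the exponent is $p-(|\alpha|+|\alpha^{-1}\beta|+|\beta^{-1}\delta|)\le 0$, with equality exactly on the geodesics $\id\to\alpha\to\beta\to\delta$, i.e.\ $\alpha=\prod_{i\in A}\tau_i$, $\beta=\prod_{i\in B}\tau_i$ with $A\subseteq B\subseteq\{1,\dots,p\}$ and $\tau_i=(i^T,i^B)$. For such a pair, Lemma~\ref{countingloops} controls $\#(\gamma^{-1}\alpha)$ and gives $f_{n/l}(\beta)=(n/l)^{|B|}$, while $\delta_l=\id$ forces $\#(\beta^{-1}\delta_l)=\#\beta=2p-|B|$; one then checks that the power of $n$ is $0$ and that all factors of $l$ collapse to $l^{-2|B|}$, so each geodesic term equals $l^{-2|B|}$ times the corresponding term in the computation leading to (\ref{limitmomentUbar}). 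Equivalently, the whole limit is (\ref{limitmomentUbar}) with $t|m|^2$ (which equals $1/k$ for the pure Bell input of this section) replaced by $t|m|^2/l^2=1/(kl^2)$; summing the three partial sums $(a),(b),(c)$ via Lemma~\ref{multinomialidentities},
\[
\lim_{n\to\infty}\E\,\trace[Z^p]=\Big[\tfrac{1}{kl^2}+\tfrac{1}{k^2}\big(1-\tfrac{1}{kl^2}\big)\Big]^p+(k^2-1)\Big[\tfrac{1}{k^2}\big(1-\tfrac{1}{kl^2}\big)\Big]^p=\lambda_1^p+(k^2-1)\lambda_2^p,
\]
which are exactly the moments of $\frac{1}{k^2}\delta_{\lambda_1}+\frac{k^2-1}{k^2}\delta_{\lambda_2}$, and the claim follows.

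The whole argument is a transcription of Theorem~\ref{thm:bell-phenomenon}, so the only real work is the bookkeeping in the graphical calculus: pinning down the fixed wiring $\delta_l$ of the $\C^l$ decorations, confirming $\#(\beta^{-1}\delta_l)=2p-|B|$ along geodesics, and --- the delicate point --- checking that the $l^{-2p}$ normalisation, the $l^{2p-|B|}$ from the $\C^l$-loops, and the $l^{-|B|}$ hidden inside $f_{n/l}(\beta)=(n/l)^{|B|}$ multiply to exactly $l^{-2|B|}$ (not, say, $l^{-|B|}$). That exponent is what fixes $\lambda_1$ and $\lambda_2$; the rest follows the blueprint of Theorem~\ref{thm:bell-phenomenon}.
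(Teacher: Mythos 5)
Your proposal is correct and follows essentially the same route as the paper: the authors likewise set up the Weingarten moment formula with the extra $l^{-2p}$ normalisation, the $l^{\#\beta}$ loops from the two $\mathrm{I}_l/l$ blocks, and the $(n/l)^{\#(\beta^{-1}\delta)}$ Bell necklaces, reduce to the geodesic $\id\to\alpha\to\beta\to\delta$ by the same exponent count, and obtain each surviving term as $l^{-2|B|}$ times the pure-Bell term before summing with Lemmas \ref{countingloops} and \ref{multinomialidentities}. Your bookkeeping of the $l$-powers ($l^{-2p}\cdot l^{2p-|B|}\cdot l^{-|B|}=l^{-2|B|}$) and the resulting substitution $t|m|^2\mapsto 1/(kl^2)$ in \eqref{limitmomentUbar} reproduce the paper's $S_1+S_3-S_2$ computation exactly.
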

\begin{proof}
Using the graphical Weingarten calculus and counting the loops in the diagrams, we start from
\begin{equation}\label{eq:moments-mixed-Bell}
\mathbb E \mathrm{Tr}(Z^p) = l^{-2p}(n/l)^{-p}\sum_{\alpha, \beta \in \mathcal S_{2p}} n^{\#\alpha} k^{\#(\alpha^{-1}\gamma)} l^{\#\beta} (n/l)^{\#(\beta^{-1}\delta)} \mathrm{Wg}(nk, \alpha^{-1}\beta).
\end{equation}
Using the leading term in the Weingarten function, the general term in the above expression contains $n^{f(\alpha, \beta)}$, where
$$f(\alpha, \beta) = -p + \#\alpha + \#(\beta^{-1}\delta) - 2p - |\alpha^{-1}\beta| = p-(|\alpha| + |\alpha^{-1}\beta| + |\beta^{-1}\delta|).$$
Using the geodesic inequality $|\alpha| + |\alpha^{-1}\beta| + |\beta^{-1}\delta| \geq |\delta| = p$, we conclude that only terms corresponding to permutations on the 
geodesic $\mathrm{id} \to \alpha \to \beta \to \delta$ will contribute asymptotically. We have thus
$$\mathbb E \mathrm{Tr}(Z^p) = (1+o(1)) \sum_{\mathrm{id} \to \alpha \to \beta \to \delta} l^{-2|\beta|} k^{\#(\alpha^{-1} \gamma) - 2p - |\alpha^{-1}\beta|} (-1)^{|\alpha^{-1}\beta|}.$$
Now recall Lemma \ref{countingloops}.
Since in this lemma one has to distinguish the cases where $A = \emptyset$ (i.e. $\alpha = \mathrm{id}$) and $A \neq \emptyset$, we compute separately the following quantities:
\begin{align*}
S_1 &= \sum_{\emptyset = A \subseteq B \subseteq [p]} l^{-2|B|} k^{2 - 2p - |B|} (-1)^{|B|} = k^{2-2p}(1-k^{-1}l^{-2})^p = k^2(k^{-2}-k^{-3}l^{-2})^p.\\
S_2 &= \sum_{\emptyset = A \subseteq B \subseteq [p]} l^{-2|B|} k^{- 2p - |B|} (-1)^{|B|} = k^{-2p}(1-k^{-1}l^{-2})^p = (k^{-2}-k^{-3}l^{-2})^p.\\
S_3 &= \sum_{\emptyset \subseteq A \subseteq B \subseteq [p]} l^{-2|B|} k^{|A| - 2p - |B \setminus A|} (-1)^{|B \setminus A|} = k^{-2p}(1+kl^{-2}-k^{-1}l^{-2})^p \\&= (k^{-2}+k^{-1}l^{-2}-k^{-3}l^{-2})^p.
\end{align*}
(here we used Lemma \ref{multinomialidentities} again).
The result in the statement follows then from 
\begin{align*}
\mathbb E \mathrm{Tr}(Z^p) &= (1+o(1))\left[ S_1 + (S_3 - S_2)\right] \\
&= (1+o(1))\left[(k^{-1}l^{-2} + k^{-2}-k^{-3}l^{-2})^p + (k^2-1)(k^{-2}-k^{-3}l^{-2})^p\right].
\end{align*}
\end{proof}

\begin{theorem}\label{thm:mixed-Bell-complementary}
In the asymptotic regime $n \to \infty$, $k,l$ fixed, the rank $k^2l^2$ random matrix $\mathbb M_{n^2}(\mathbb C) \ni Z = [\Phi \otimes \bar \Phi](X_\text{mixed})$ has limiting spectral distribution
$$\frac{1}{k^2l^2}\delta_{\lambda_1} + \frac{k^2l^2-k^2}{k^2l^2}\delta_{\lambda_2} + \frac{k^2-1}{k^2l^2}\delta_{\lambda_3},$$
where
\begin{align*}
\lambda_1 &= \frac{1}{kl^2} + \frac{1}{k^2l^2} - \frac{1}{k^3l^2}\\
\lambda_2 &= \frac{1}{k^2l^2}\\
\lambda_3 &= \frac{1}{k^2l^2} - \frac{1}{k^3l^2}.
\end{align*}
\end{theorem}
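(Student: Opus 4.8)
The plan is to run the graphical Weingarten calculus exactly as in the proof of Theorem~\ref{thm:mixed-Bell}, the only change being that we now look at the \emph{direct} channels rather than their complements. In the diagram of Figure~\ref{fig:mixed-Bell-input} this means the $n$-dimensional decorations are now the output legs of $Z$ (wired cyclically inside $\mathrm{Tr}(Z^p)$ by the permutation $\gamma$ of Theorem~\ref{thm:bell-phenomenon}, $\gamma(i^T)=(i-1)^T$, $\gamma(i^B)=(i+1)^B$), while the $k$-dimensional decorations are the environments, traced within each channel and hence carrying the trivial initial wiring; the input $X_\text{mixed}$ contributes exactly the same ``necklace'' data as before, namely an $l$-dimensional part from the two factors $\mathrm I_l/l$ (trivial wiring) and an $(n/l)$-dimensional Bell part wired by $\delta$, $\delta(i^T)=i^B$, $\delta(i^B)=i^T$. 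Summing over removals $(\alpha,\beta)\in\mathcal S_{2p}^2$ this yields the analogue of \eqref{eq:moments-mixed-Bell} with $n$ and $k$ exchanged in the two leftmost loop factors:
\[
\mathbb E\,\mathrm{Tr}(Z^p)=l^{-2p}(n/l)^{-p}\sum_{\alpha,\beta\in\mathcal S_{2p}}k^{\#\alpha}\,n^{\#(\alpha^{-1}\gamma)}\,l^{\#\beta}\,(n/l)^{\#(\beta^{-1}\delta)}\,\mathrm{Wg}(nk,\alpha^{-1}\beta).
\]

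Next I would extract the dominant order in $n$. Using $\mathrm{Wg}(nk,\sigma)\sim(nk)^{-2p-|\sigma|}\mathrm{Mob}(\sigma)$ and $\#\rho=2p-|\rho|$ for $\rho\in\mathcal S_{2p}$, the exponent of $n$ in the general term is $p-\bigl(|\alpha^{-1}\gamma|+|\alpha^{-1}\beta|+|\beta^{-1}\delta|\bigr)$, which by the triangle inequality of Lemma~\ref{lem:S_p} is at most $p-|\gamma^{-1}\delta|$. A direct computation gives $\gamma^{-1}\delta=\prod_{i=1}^{p}\eta_i$ with $\eta_i=(i^T,(i-1)^B)$ (indices mod $p$), a product of $p$ disjoint transpositions, so $|\gamma^{-1}\delta|=p$, the exponent is nonpositive, and only the geodesics $\gamma\to\alpha\to\beta\to\delta$ survive. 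By Lemma~\ref{lem:S_p}, after left translation by $\gamma$ these are parametrized by pairs $S\subseteq T\subseteq[p]$ through $\alpha=\gamma\prod_{i\in S}\eta_i$ and $\beta=\gamma\prod_{i\in T}\eta_i$, for which one immediately reads off $\#(\alpha^{-1}\gamma)=2p-|S|$, $|\alpha^{-1}\beta|=|T|-|S|$, $\mathrm{Mob}(\alpha^{-1}\beta)=(-1)^{|T|-|S|}$ and, since $\beta^{-1}\delta=\prod_{i\notin T}\eta_i$, $\#(\beta^{-1}\delta)=p+|T|$. The one nontrivial count is that of the environment loops, and here I would prove the analogue of Lemma~\ref{countingloops}, namely $\#\alpha=2$ if $S=\emptyset$ and $\#\alpha=|S|$ otherwise (and likewise $\#\beta$ with $T$ in place of $S$). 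This is the step I expect to be the main obstacle. It follows from the identity $\gamma\prod_{i\in S}\eta_i=\delta\prod_{i\notin S}\eta_i$ (a consequence of $\prod_{i\in[p]}\eta_i=\gamma^{-1}\delta$ and $\eta_i^2=\mathrm{id}$): $\alpha$ is then the product of the perfect matching $\delta$ with the partial matching $\{\eta_i:i\notin S\}$, whose cycles biject with the connected components of the union of the two matchings; that union is a single $2p$-cycle when $S=\emptyset$ (hence two cycles in the product) and, when $S\ne\emptyset$, breaks into exactly $|S|$ paths, each starting and ending with a $\delta$-edge and therefore contributing one cycle.

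Granting this, $\lim_n\mathbb E\,\mathrm{Tr}(Z^p)$ equals $\sum_{S\subseteq T\subseteq[p]}k^{\#\alpha-2p-|T|+|S|}\,l^{\#\beta-2p-|T|}\,(-1)^{|T|-|S|}$, which I would evaluate by splitting into the three cases $S=T=\emptyset$, $S=\emptyset\ne T$, and $\emptyset\ne S\subseteq T$ (mirroring the quantities $S_1,S_2,S_3$ in the proof of Theorem~\ref{thm:mixed-Bell}) and invoking the multinomial identities of Lemma~\ref{multinomialidentities}. Setting $a=1-k^{-1}$ and $b=1+k-k^{-1}$, so that $\lambda_1=b/(k^2l^2)$, $\lambda_2=1/(k^2l^2)$, $\lambda_3=a/(k^2l^2)$, the three contributions come out to $k^{2-2p}l^{2-2p}$, $k^{2-2p}l^{-2p}(a^p-1)$, and $k^{-2p}l^{-2p}(b^p-a^p)$ respectively; their sum is $k^{-2p}l^{-2p}\bigl[b^p+(k^2l^2-k^2)+(k^2-1)a^p\bigr]=\lambda_1^p+(k^2l^2-k^2)\lambda_2^p+(k^2-1)\lambda_3^p$. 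Since $Z$ has rank $k^2l^2$ asymptotically almost surely, this is $k^2l^2$ times the $p$-th moment of the claimed probability measure, and the assertion follows by the method of moments, Carleman's condition ensuring uniqueness of the limit; if convergence in probability of the empirical distribution of the nonzero eigenvalues is desired, one adds a variance estimate $\mathrm{Var}(\mathrm{Tr}(Z^p))=O(n^{-2})$ obtained, as in Step~2 of the proof of Theorem~\ref{thm:bell-phenomenon}, from the decoupling of the leading diagrams for $\mathbb E(\mathrm{Tr}(Z^p))^2$ into two independent copies.
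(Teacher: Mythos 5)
Your proposal is correct and follows essentially the same route as the paper: the same moment formula from the graphical Weingarten calculus, restriction to the geodesic $\gamma \to \alpha \to \beta \to \delta$, the parametrization by pairs of subsets with the cycle counts $\#\alpha$, $\#\beta$, $|\alpha^{-1}\beta|$, and the same multinomial evaluation yielding $\lambda_1^p + (k^2l^2-k^2)\lambda_2^p + (k^2-1)\lambda_3^p$. The only (welcome) addition is that you prove the cycle-counting facts ($\#\alpha = 2$ if $S=\emptyset$, $|S|$ otherwise) directly via the matching argument, whereas the paper simply cites the corresponding computation from \cite{cn1}, Section 6.1.
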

\begin{proof}
As in the previous theorem, we start from the moment formula obtained via graphical calculus
$$\mathbb E \mathrm{Tr}(Z^p) = l^{-2p}(n/l)^{-p}\sum_{\alpha, \beta \in \mathcal S_{2p}} k^{\#\alpha} n^{\#(\alpha^{-1}\gamma)} l^{\#\beta} (n/l)^{\#(\beta^{-1}\delta)} \mathrm{Wg}(nk, \alpha^{-1}\beta).$$
Note that the only difference between the above formula and equation \eqref{eq:moments-mixed-Bell} is that $n$ and $k$ got switched in the first two factors of the general term. 
The exponent of the large parameter $n$ is given by
$$f(\alpha, \beta) = -p + \#(\alpha^{-1}\gamma) + \#(\beta^{-1}\delta) - 2p - |\alpha^{-1}\beta| = p-(|\gamma^{-1}\alpha| + |\alpha^{-1}\beta| + |\beta^{-1}\delta|).$$
The permutations $\alpha,\beta$ which minimize the above quantity are the ones on the geodesic $\gamma \to \alpha \to \beta \to \delta$. This geodesic has been studied in \cite{cn1}, Section 6.1. 
We recall here that these permutations are indexed by pair of subsets $\emptyset \subseteq A \subseteq B \subseteq [p]$ and that one has 
$$\#\alpha = \begin{cases} 2 & A = \emptyset \\ |A| & A \not= \emptyset \end{cases}.$$
A similar formula works for $\beta$ and $B$.
Also, 
\[
|\alpha^{-1}\beta| = |B \setminus A|
\]
One has 
$$\mathbb E \mathrm{Tr}(Z^p) = (1+o(1)) (kl)^{-2p}
\sum_{\emptyset \subseteq A \subseteq B \subseteq [p]} k^{\#\alpha-|B \setminus A|}l^{\#\beta-|B|}(-1)^{|B \setminus A|}.$$
As in the proof of Theorem \ref{thm:mixed-Bell}, we expand the general term in the above sum in terms of $|A|$ and $|B \setminus A|$ and deal separately with the empty set cases.
\begin{align*}
S_1 &= (kl)^{-2p} \sum_{\emptyset \subseteq A \subseteq B \subseteq [p]} k^{|A|} (k^{-1})^{|B \setminus A|} 
= (kl)^{-2p} \l(1+k-\frac{1}{k}\r)^p \\
S_2 &=  (kl)^{-2p} \sum_{\emptyset = A \subseteq B \subseteq [p]}  (k^{-1})^{|B|} 
= (kl)^{-2p} \l(1-\frac{1}{k}\r)^p \\
S_3 &=  (kl)^{-2p} \sum_{\emptyset = A \subseteq B \subseteq [p]}  k^{2-|B|}
= (kl)^{-2p}k^2 \l(1-\frac{1}{k}\r)^p\\
S_4 &= \sum_{\emptyset = A = B } (kl)^{-2p} k^2 =  (kl)^{-2p} k^2\\
S_5 &= \sum_{\emptyset =A=B} (kl)^{-2p} k^2l^2 =(kl)^{-2p} k^2l^2
\end{align*}
Then,
\begin{align*}
\E \trace (Z_n^p) &= (1 + o(1)) 
[S_1 -S_2 + S_3 - S_4 +S_5] \\
&= (1 + o(1)) \frac{1}{(kl)^{2p}} \l[ \l(1+k-\frac{1}{k}\r)^p
(k^2-1) \l(1-\frac{1}{k}\r)^p + (l^2 -1)k^2
\r] \\
&= (1 + o(1)) \Big[ \l(\frac{1}{k^2l^2}+\frac{1}{kl^2}-\frac{1}{k^3l^2}\r)^p \\
&\hspace{3cm} + 
(k^2-1) \l(\frac{1}{k^2l^2}-\frac{1}{k^3l^2}\r)^p + (l^2 -1)k^2 \frac{1}{k^2l^2}
\Big]
\end{align*}
\end{proof}

\subsection{Adapted mixed inputs for $\Phi \otimes \bar \Phi$}\label{sec:adapted}

In the final section of this paper we introduce a new class of input states for product channels which we believe have interesting properties. 
We consider inputs which have the particularity of being \emph{random} and \emph{correlated to the channel}. 
Our approach is motivated by the fact that the quantum state reaching the minimum value in equation \eqref{eq:MOE} depends \emph{a priori} on the channel $\Phi$. 
This obvious fact is hard to exploit in practice, and, to our knowledge, all attempts up to date to find ``interesting'' input states do not use this dependence. 
The states we introduce are \emph{adapted} to the channel, in the sense that they are constructed using the same randomness as the channel (the unitary matrix $U$ appearing in 
the Stinespring form of the quantum channel). In order to do that, we have to make one compromise: the input states we consider are no longer pure, but mixed.

In Figure \ref{fig:mixed-adapted-input} we describe the input matrix $X$. Note that $X$ is defined via the same unitary matrix $U$ appearing in the channels $\Phi$ and 
$\overline \Phi$, hence the adaptedness (or the correlation). The third subfigure contains the global picture for the output matrix $Z$. Note that each unitary matrix $U$ is 
wired to a copy of $U^*$, hence one expects some cancellations to occur and, in the end, to obtain a low-entropy output $Z$. 

One could, in principle, apply the graphical calculus to compute the moments of $Z$. In practice, it turns out that this task is difficult since there are 8 boxes $U$ in 
each group of $Z$. The Weingarten sum associated to a moment $p$ of $Z$ will be indexed by pair of permutations $\alpha,\beta \in \mathcal S_{4p}$, making it very 
difficult to compute. We postpone this computation to further work. 

\begin{figure}[htbp]
  \centering
  \subfloat[Output matrix $Z$ for an input $X$ which may be non-pure.]{\includegraphics{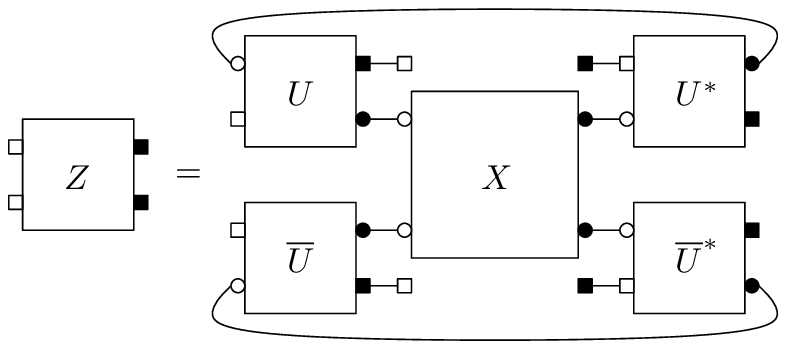}} \\
  \subfloat[An adapted input $X$ of rank $k$.]{\includegraphics{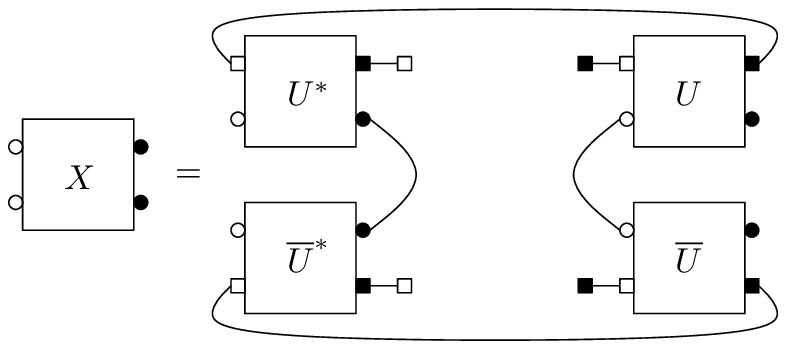}}\\
  \subfloat[The resulting output matrix $Z$ of maximal rank $k^2$.]{\includegraphics{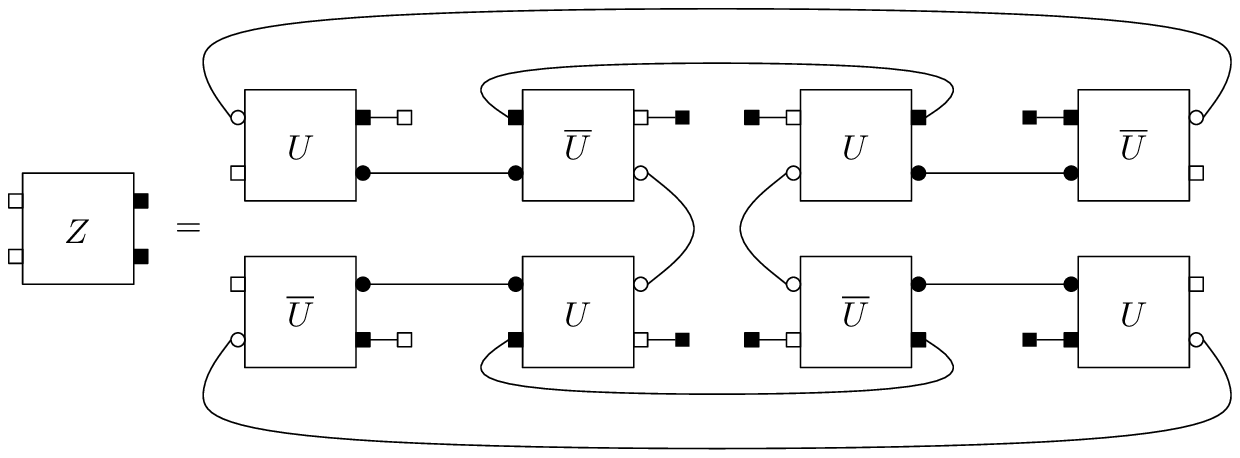}} 
  \caption{Adapted inputs for conjugated channels}
  \label{fig:mixed-adapted-input}
\end{figure}

\section{Discussion}
In this paper, we investigated eigenvalue distribution of
outputs of tensor-product of quantum channels. 

We gathered evidence that
 product of a quantum channel and its complex congregate with a 
 Bell state input
gives the least output entropy.
However this is not proven generally yet. 
In order to get smaller output entropy - if it is possible -
a direction to explore could be to investigate a situation where both the channel and the input are random - and correlated in an appropriate way, as suggested in the adapted input model of Section \ref{sec:adapted}.
We leave this for future work. 

\section*{acknowledgements}
The three authors will like to thank in the first place the ``Quantum Information Theory'' program at the Mittag-Leffler Institute, where this collaboration was initiated. Our research was supported by NSERC Discovery grants
and an ERA at the University of Ottawa (B.C.). The research of I.N. was supported by a PEPS grant from the Institute of Physics of the CNRS and the ANR project ANR 2011 BS01 008 01. 
The research of B.C. was also supported by the ANR Granma and this manuscript was finalized while he was visiting the RIMS at Kyoto University.
The research of M.F. was supported by QuantumWorks and NSERC Discovery grant.


\begin{thebibliography}{99}

\bibitem{bcn}
Belinschi, S., Collins, B. and Nechita, I.
{\it Laws of large numbers for eigenvectors and eigenvalues associated to random subspaces in a tensor product.} 
arXiv:1008.3099.









\bibitem{collins-imrn}
Collins, B. 
{\it Moments and Cumulants of Polynomial random variables on unitary groups, 
the Itzykson-Zuber integral and free probability }
Int. Math. Res. Not., (17):953-982. 

\bibitem{cn1}
Collins, B. and Nechita, I.
{\it Random quantum channels I: Graphical calculus and the Bell state phenomenon.} 
Comm. Math. Phys. 297 (2010), no. 2, 345-370.

\bibitem{cn2}
Collins, B. and Nechita, I.
{\it Random quantum channels II: Entanglement of random subspaces, R\'enyi entropy estimates and additivity problems.} 
Advances in Mathematics 226 (2011), 1181–1201.

\bibitem{cn3}
Collins, B. and Nechita, I.
{\it Gaussianization and eigenvalue statistics for Random quantum channels (III)}
to appear in Annals of Applied Probability.

\bibitem{cn-entropy}
Collins, B. and Nechita, I.
{\it Eigenvalue and Entropy Statistics for Products of Conjugate Random Quantum Channels.}
Entropy, 12(6), 1612-1631.


\bibitem{collins-nechita-zyczkowski}
Collins, B., Nechita, I.; \.Zyczkowski, K.
{\it Random graph states, maximal flow and Fuss-Catalan distributions.}
J. Phys. A: Math. Theor. 43, 275303.

\bibitem{collins-sniady}
Collins, B. and \'Sniady, P.
{\it Integration with respect to the Haar measure on unitary, orthogonal and symplectic group.} 
Comm. Math. Phys. 264, no. 3, 773--795. 

\bibitem{Fuk06} 
M. Fukuda,
``Simplification of additivity conjecture in quantum information theory'',
{\em Quant. Info. Proc.}, {\bf 6}, 179--186, (2007);
arXiv:quant-ph/0608010.

\bibitem{fukuda-king}
Fukuda, M. and King, C.
{\it Entanglement of random subspaces via the Hastings bound}
arXiv:0907.5446

\bibitem{fukuda-king-moser}
Fukuda, M., King, C. and Moser, D.
{\it Comments on Hastings' Additivity Counterexamples.}
arXiv:0905.3697.

\bibitem{FukudaWolf07}
M. Fukuda, M. M. Wolf, 
``Simplifying additivity problems using direct sum constructions'', {\em J. Math. Phys. } {\bf 48} 072101, (2007).




\bibitem{hastings}
Hastings, M.B.
{\it Superadditivity of communication capacity using entangled inputs}
Nature Physics 5, 255. 

\bibitem{hayden-winter}

Hayden, P. and Winter, A.
{\it Counterexamples to the maximal p-norm multiplicativity conjecture for all $p>1$}. 
Comm. Math. Phys. 284, no. 1, 263--280.

\bibitem{Holevo05}
Holevo, A. S. 
``On complementary channels and the additivity problem'',
Probab. Theory and Appl., {\bf 51}, 133-143, (2005).

 
\bibitem{KMNR}
C. King, K. Matsumoto, M. Nathanson, M. B. Ruskai,
``Properties of Conjugate Channels with Applications to Additivity and Multiplicativity'', 
Markov Processes and Related Fields, volume {\bf 13}, no. 2, 391 -- 423 (2007).



\bibitem{nica-speicher}
Nica, A and Speicher, R.
{\it Lectures on the combinatorics of free probability} 
volume 335 of London Mathematical Society Lecture Note Series. Cambridge University Press, Cambridge.


\bibitem{Shor04}
P. W. Shor,
``Equivalence of additivity questions in quantum information theory'',
Comm. Math. Phys. {\bf 246}(3):453-472 (2004).





\end{thebibliography}
\end{document}